\documentclass[lettersize,journal]{IEEEtran}
\usepackage{amsmath,amsfonts}
\usepackage{algorithmic}
\usepackage{algorithm}
\usepackage{array}
\usepackage[caption=false,font=normalsize,labelfont=sf,textfont=sf]{subfig}
\usepackage{textcomp}
\usepackage{stfloats}
\usepackage{url}
\usepackage{longtable}
\usepackage{verbatim}
\usepackage{graphicx}
\usepackage{cite}
\usepackage{bm}       % for bold math
\usepackage{amsthm}
\usepackage{color}
\newtheorem{thm}{Theorem}
\newtheorem{lemma}{Lemma}
\theoremstyle{definition}
\newtheorem{defn}{Definition}

\hyphenation{op-tical net-works semi-conduc-tor IEEE-Xplore}
% updated with editorial comments 8/9/2021

\begin{document}

% \title{A Sample Article Using IEEEtran.cls\\ for IEEE Journals and Transactions}

\title{A symmetric LWE-based Multi-Recipient Cryptosystem}

% \author{IEEE Publication Technology,~\IEEEmembership{Staff,~IEEE,}
\author{Saikat Gope, Srinivasan Krishnaswamy, Chayan Bhawal
        % <-this % stops a space
% \thanks{This paper was produced by the IEEE Publication Technology Group. They are in Piscataway, NJ.}% <-this % stops a space
% \thanks{Manuscript received April 19, 2021; revised August 16, 2021.}
}

% The paper headers
% \markboth{IEEE TRANSACTIONS ON INFORMATION FORENSICS AND SECURITY}%
% {Shell \MakeLowercase{\textit{et al.}}: A Sample Article Using IEEEtran.cls for IEEE Journals}

% \IEEEpubid{0000--0000/00\$00.00~\copyright~2021 IEEE}
% Remember, if you use this you must call \IEEEpubidadjcol in the second
% column for its text to clear the IEEEpubid mark.

\maketitle

\begin{abstract}
This article describes a post-quantum multi-recipient symmetric cryptosystem whose security is based on the hardness of the LWE problem. In this scheme a single sender encrypts multiple messages for multiple recipients generating a single ciphertext which is broadcast to the recipients. Each recipient decrypts the ciphertext with her secret key to recover the message intended for her. In this process, the recipient cannot efficiently extract any information about the other messages. This scheme is intended for messages like images and sound that can tolerate a small amount of noise. This article introduces the scheme and establishes its security based on the LWE problem. Further, an example is given to demonstrate the application of this scheme for encrypting multiple images.
     
%This document describes the most common article elements and how to use the IEEEtran class with \LaTeX \ to produce files that are suitable for submission to the IEEE.  IEEEtran can produce conference, journal, and technical note (correspondence) papers with a suitable choice of class options. 
\end{abstract}

\begin{IEEEkeywords}
LWE problem, pseudorandom map, multi-recipient cryptosystem.
\end{IEEEkeywords}

\section{Introduction}
% \IEEEPARstart{T}{his} 

\IEEEPARstart{A} Multi-Recipient Encryption Scheme (MRES) simultaneously encrypts multiple messages for multiple receivers. 
%MRES allows a sender to encrypt a plain-text message using a unique public-key $(pk_i)$ and secret-key $(sk_i)$ pair, i.e., $(pk_i,sk_i)$ for $i=1,2,...,n$ and sends the corresponding ciphertext to multiple receivers via a secure channel. A receiver knowing secret-key $(sk_i)$ and public-key $(pk_i)$ for $i=1,2,...,n$, decrypts the original plain-text message. 
The idea of MRES was introduced by Bellare et al. \cite{bellare2000}. This scheme was a public key encryption scheme wherein a set of messages intended for different users are simultaneously encrypted using their respective public keys. The encryption algorithm generates a set of ciphertexts, one for each recipient. Further, Kurosawa proposed an MRES design having a shortened ciphertext \cite{kurosawa2002multi}. The ciphertext size is almost half of the $n$ recipient scheme and showed its security to be almost the same as that of a single-recipient scheme. MRES using randomness re-use was proposed in \cite{bellare2002randomness-reuse} for reducing transmission load and computational cost.  Other notable contributions in this area include \cite{bellare2003multi,bellare2007multi,bellare2007multirecipient}. \par
With the advancement of quantum computation and its applications \cite{shor1999polynomial,long2001grover}, problems such as the discrete logarithm problem or factorization of primes can be efficiently solved. Hence, encryption schemes whose security is based on the hardness of these problems become vulnerable to quantum attacks. Lattice-based problems serve as promising alternatives to develop schemes that are resistant to quantum attacks. Some hard problems related to lattices are the $\text{GapSVP}_{\gamma}$ problem, $\text{GapCVP}_{\gamma}$ problem and the SIVP problem \cite{Peikart2022_approx_SVP,micciancio2001_Gap_SVP_CVP,peikert2009_SVP_GapSVP,SIVP_SVP_CVP}. The Learning With Error (LWE) problem \cite{regev2009lattices,regev2010learning} involves solving a set of linear equations over a large finite field in the presence of noise.  The hardness of the LWE problem reduces to that of the $\text{GapSVP}_{\gamma}$ problem \cite{Peikart2022_approx_SVP,micciancio2001_Gap_SVP_CVP,peikert2009_SVP_GapSVP}. This problem has therefore led to the development of numerous post-quantum cryptosystems.

\par
% In this paper, we implement the Multi-message Multi-Recipient Encryption Scheme (MMRES) using a single publicly-known Initialization Vector (IV) in $\mathbb{F}_q^k$. Using secret key $sk_i$, the $i^{th}$ recipient obtains the $24$-bit noisy plain-text message stream. Further, the middle $8$-bit 
In this paper, we implement a symmetric Multi-Key Multi-Recipient (MKMR) encryption scheme based on the hardness of the LWE problem. In this scheme the sender simultaneously encrypts a set of messages for a set of receivers. each receiver has a secret key shared with the sender. 
The sender generates {\bf a single ciphertext}, which each user decrypts using their respective secret key to recover their intended message.
%The sender generates a \textbf{a single ciphertext} which is decrypted by each user using the respective secret key to recover the message intended for her. 
We prove that the resulting tuple of ciphertexts is indistinguishable from a random block of data sampled from a uniform distribution of the appropriate size. Further, we demonstrate an application of this scheme for multiple image encryption. \par
This paper is organized into 4 sections. Section 2 introduces the preliminaries that are needed to understand the rest of the paper. In section 3, we explain the LWE-based multi-recipient encryption scheme along with an example. The conclusion is presented in section 4.

\subsection{Motivation and Contribution}

Most multi-recipient encryption schemes available in literature are public key schemes \cite{bellare2000,kurosawa2002multi,bellare2002randomness-reuse,bellare2003multi,bellare2007multi,bellare2007multirecipient}. Further, the security of these schemes are based on the hardness of problems like the discrete logarithm problem. These problems are efficiently solvable using quantum algorithms. Hence the available multi-recipient schemes are vulnerable in post-quantum scenario. 
This paper proposes a post-quantum symmetric multi-recipient encryption scheme whose security is based on the hardness of the LWE problem. A unique feature of this scheme is that it generates a single ciphertext for all messages. Each recipient uses their respective key to decrypt the ciphertext and recover the message intended for them. Sending the same ciphertext to multiple recipients introduces redundancy; if a recipient loses the ciphertext, they can recover it from another recipient. The proposed scheme is computationally light, with the encryption process consisting only of repeated matrix vector multiplications.

%The same ciphertext being sent to multiple recipients also provides for redundancy wherein in the event of a recipient loosing the ciphertext, she can recover the same from another recipient. The proposed scheme is computationally  light and the encryption process only involves repeated matrix vector multiplications. 

%This enables us to do encryptions that are extremely fast and efficient. Here, we design an encryption scheme that uses a pseudorandom map leveraging the hardness of the LWE problem, thereby generating a ciphertext stream of length $l$ for each recipient. The ciphertext stream is indistinguishable from a truly random stream of length $l$. Our proposed scheme, is a symmetric encryption scheme having $m$ independent secret keys shared among $m$ unique recipients. 
% file is intended to serve as a ``sample article file''
% for IEEE journal papers produced under \LaTeX\ using
% IEEEtran.cls version 1.8b and later. The most common elements are covered in the simplified and updated instructions in ``New\_IEEEtran\_how-to.pdf''. For less common elements you can refer back to the original ``IEEEtran\_HOWTO.pdf''. It is assumed that the reader has a basic working knowledge of \LaTeX. Those who are new to \LaTeX \ are encouraged to read Tobias Oetiker's ``The Not So Short Introduction to \LaTeX ,'' available at: \url{http://tug.ctan.org/info/lshort/english/lshort.pdf} which provides an overview of working with \LaTeX.

\section{Preliminaries}
\subsection{Notation Table}
Table 1 shows the set of notations that are used in the paper.
\begin{table}[!t]
\caption{Table of Symbolic Notations\label{tab:notations_tab}}
\centering
\begin{tabular}{|p{1.5cm}|p{6cm}|}
\hline
\textbf{Symbols Used} & \textbf{Interpretations}\\
\hline

$\mathbb{F}_q$ & Finite field $\mathbb{F}_q$ with cardinality $q$\\ 

\hline
$\mathbb{F}_q^n$  & n-dimensional vector space over finite field $\mathbb{F}_q$ with cardinality $q$\\

\hline
$\left \| v \right \|_{p}$ & $l_p$ norm of n-dimensional vector $v$ over field $\mathbb{F}$ with $p \geq 1$\\

\hline
$\left \| v \right \|$ & $2$ norm of n-dimensional vector $v$ over field $\mathbb{F}$ \\

\hline
$\left \langle s,v \right \rangle$ & Inner product of vector $s$ and vector $v$ where $s,v \in \mathbb{F}_q^n$ \\
% \hline

% \noalign{\smallskip}\hline
% $v_1 \oplus v_2$ & Bitwise Ex-OR operation between two logical binary vectors $v_1$ and $v_2$ \\

% \noalign{\smallskip}\hline
% $VERT(E)$ & Append vectors in $E$ vertically downwards to form a column vector\\

% \noalign{\smallskip}\hline
% $v_i$ & $i^{th}$ element of vector $v$\\

% \noalign{\smallskip}\hline
% $v_{i:k}$ & The subsequence of a binary sequence $v$  from the $i^{th}$ bit to ${k}^{th}$ bit.\\

% \noalign{\smallskip}\hline
% $\eta(\mu, \sigma, n)$ & A Gaussian random distribution vector of length $n$ with mean $\mu$ and standard deviation $\sigma$ \\

% \hline
% $v \xleftarrow{\$} \eta$ & $v$ takes a value that is randomly sampled from the distribution $\eta$\\

% \hline
% $\left \lfloor v \right \rfloor$ & Round elements of array $v$ to the nearest integer less than or equal to the to the corresponding entries of $v$\\

\hline
$\left \lceil v \right \rfloor$ & Round elements of array $v$ to its nearest integer\\

% \noalign{\smallskip}\hline
% $v\,\,(mod\,\,1)$ & Fractional part of vector $v$\\

% \noalign{\smallskip}\hline
% $int(v)$ & Integer part of vector $v$\\

% \noalign{\smallskip}\hline
% $INT(v)$ & Integer corresponding to a binary string $v$\\

% \noalign{\smallskip}\hline
% $DP(v)$ & Double precision number corresponding to a binary string $v$\\

\hline
$v$ $mod$ $q$ & The integer between $\lfloor \frac{-q}{2}\rfloor$ and $\lfloor \frac{q}{2}\rfloor$ which is equivalent to the integer $v$ modulo $q$ in $\mathbb{F}_q$\\

\hline
$\mathcal{M}_i$ & $i^{th}$ column of matrix $\mathcal{M}_{m \times l}$ where $i=1,2,\ldots,l$\\

\hline
$\chi^m$ & Set of $m$ tuple elements from distribution $\chi$ \\

% \hline
% $size(V)$ & Size of matrix $V_{m \times n}$\\

\hline

% \hline
% One & Two\\
% \hline
% Three & Four\\
% \hline
\end{tabular}
\end{table}
We now formally define lattices and some of the hard problems related to lattices include the approximate shortest vector problem i.e. the $\text{GapCVP}_{\gamma}$ problem and Learning With Error (LWE) problem.

\subsection{Lattices}
\begin{defn}
Given $n$-linearly independent basis vectors $\mathcal{B} = \{ b_1, b_2, \ldots, b_n\} \subset \mathbb{R}^n$, the set of all integer linear combination of basis vectors $\mathcal{B}$ is defined as lattice $\mathcal{L} \subset \mathbb{R}^n $. Mathematically, it is represented as:
\begin{equation}
\label{lattice_L}
    \mathcal{L} := \mathcal{L}(\mathcal{B}) = \left \{\sum_{i=1}^{n}a_{i}b_{i} \mid a_{i} \in \mathbb{Z}, 1\leq i \leq n \right \}
\end{equation}
% Hard problems related to lattices are $\text{GapSVP}_{\gamma}$ problem, $\text{GapCVP}_{\gamma}$ problem etc.
\end{defn}
\begin{defn}[\cite{peikert2009_SVP_GapSVP}]
\label{discrete_gaussian_definition}
Consider an $n$ dimensional lattice $\mathcal{L}$ with basis vectors $\mathcal{B}$ as in Definition \ref{lattice_L}, the discrete Gaussian probability distribution over lattice $\mathcal{L}$ with standard deviation $\sigma > 0$ is defined as
\begin{equation}
\label{discrete_gaussian}
    \mathcal{D}_{\mathcal{L},\sigma}(v) := \frac{\rho_\sigma(v)}{\rho_\sigma(\mathcal{L})}
\end{equation}
where $\rho_\sigma(v) := e^{\left(-\pi\left\| v\right\|^{2}/2\pi\sigma^2 \right)}$ for all $v \in \mathcal{L}$ and $\rho_\sigma(\mathcal{L}) := \sum_{y\in \mathcal{L}}e^{\left(-\pi\left\| y\right\|^{2}/2\pi\sigma^2 \right)}$.% for all $y \in \mathcal{L}$.
\end{defn}
% \begin{algorithm}
% \caption{{\it NoiseSmpl(.)} (Generation of Discrete Gaussian Noise Sample)}\label{alg:Gaussian_Distribution}
% \begin{algorithmic}[1]
% \STATE \textbf{Input:} $ \sigma, q, m$
% \STATE \textbf{Output:} $e$\\
        % \STATE Set $\mu$ and $\sigma$\\
%         \STATE Define range $R = \begin{bmatrix}
% -\left \lfloor \frac{q}{2} \right \rfloor & \left \lfloor \frac{q}{2} \right \rfloor
% \end{bmatrix}$
% \STATE Set $n$
        % \FOR{$i = 1:n$}
%             \STATE $E \xleftarrow{\$} \eta(0, \sigma,m)$\\
%             \STATE $r = \left \lceil E \right \rfloor$\\
%             % \STATE $e_i = r\,\,(mod\,\,q)$\\
%             \STATE $e = r\,\,(\text{mod}\,\,q)$\\
%         % \ENDFOR
% % \STATE $e = [e_1,e_2,\ldots,e_n]$ 
% % \STATE \textbf{return} $e_{1 \times m}$
% \STATE \textbf{return} $e$
% \end{algorithmic}
% \end{algorithm}
\begin{defn}
[\cite{Peikart2022_approx_SVP,peikert2009_SVP_GapSVP}]
Given $n$ linearly independent basis vectors $\mathcal{B} = \{ b_1,b_2, \ldots,b_n\} \subset \mathbb{R}^n$ over an $n$-dimensional lattice $\mathcal{L} = \mathcal{L}(\mathcal{B})$ , the minimum distance of the lattice $\mathcal{L}$ in $l_p$ norm with $p \geq 1$ is defined as,
    \begin{equation}
        \lambda^{(p)}(\mathcal{L}) := \displaystyle{\min_{y \in \mathcal{L}\backslash{\{0}\}} \left \| y \right \|_{p}}
    \end{equation}
    For a distance threshold $s>0$, the $\text{GapSVP}_{\gamma}$ problem with $\gamma(n) \geq 1$ refers to the problem of determining whether $\lambda^{(p)}(\mathcal{L})$ is a YES instance or NO instance. These instances are defined as follows:
    \begin{itemize}
        \item YES instance: $\lambda^{(p)}(\mathcal{L}) \leq s$
        \item NO instance: $\lambda^{(p)}(\mathcal{L}) > \gamma s$
    \end{itemize}
\end{defn}
\subsection{LWE Problem}
    The learning with errors problem with parameters $n,q,\chi$ namely $\mathcal{LWE}_{n,q,\chi}$, refers to solving a noisy set of linear equations over the finite field $\mathbb{F}_q$ where the noise is sampled from a distribution $\chi$ over $\mathbb{F}_q$. We now formally define two variants of $\mathcal{LWE}_{n,q,\chi}$ problem, the search $\mathcal{LWE}_{n,q,\chi}$ and the decision $\mathcal{LWE}_{n,q,\chi}$.
    % In finite field $\mathbb{F}_q$, solving a set of linear equations in presence of noise sampled from discrete gaussian distribution refers to the Learning With Error (LWE) problem. 
\subsubsection{Search $\mathcal{LWE}$}
\label{Search_lwe_prob_section}
\begin{defn}
Let, the secret vector $\bm{s}$ and vectors $a_1, a_2, \ldots, a_m$ be sampled from uniform distribution over $\mathbb{F}^n_{q}$. Given, the set of samples pairs $(a_1,b_1),(a_2,b_2), \ldots,(a_m,b_m)$, the search $\mathcal{LWE}_{n,q,\chi}$ refers to the problem of finding $s$ from the pair $(a_i,b_i)$ for $i = 1,2, \ldots,m$ such that $b_i \equiv \left \langle \bm{s},a_{i} \right \rangle + e_{i}\,\, ( \text{mod}\,\,q) $ where $e_{i}$ is sampled from discrete Gaussian distribution $\chi$ over $\mathbb{F}_{q}$. 
\end{defn}
% Given $m$ independent sample vectors $a_1, a_2,...,a_m$ having uniformly random distribution over finite field $\mathbb{F}_q^n$. Let, the noise samples $e_1, e_2,...,e_m$ with probability distribution $\chi$ sampled over to the finite field $\mathbb{F}_q$. Let, a vector $\mathcal{C} \in \mathbb{F}_q^n$ be chosen arbitrarily. Then, the problem of finding $\mathcal{C}$ from the sequence pair $(a_{i},b_{i})_{i = 1}^m$ translates to the problem of solving a set of linear equations with noise.
\subsubsection{Decision $\mathcal{LWE}$}
\label{Decision_lwe_prob_section}
\begin{defn}
Let, the secret vector $\bm{s}$ and vectors $a_1, a_2, \ldots, a_m$ be sampled from uniform distribution over $\mathbb{F}^n_{q}$. Consider the set of vectors $(a_1,b_1),(a_2,b_2), \ldots,(a_m,b_m)$,  where $b_i \equiv \left \langle \bm{s},a_{i} \right \rangle + e_{i}\,\, ( \text{mod}\,\,q) $ and $e_{i}$ is sampled from discrete Gaussian distribution $\chi$ over $\mathbb{F}_{q}$. The decision $\mathcal{LWE}_{n,q,\chi}$ problem refers to the problem of distinguishing this $m$-tuple from a set of $m$ vectors that are randomly sampled from a uniform distribution over $\mathbb{F}_{q}^{n+1}$. 
\end{defn}
The decision $\mathcal{LWE}_{n,q,\chi}$ is as hard as the search $\mathcal{LWE}_{n,q,\chi}$ \cite{regev2009lattices}, and the search $\mathcal{LWE}_{n,q,\chi}$ problem is equivalent to $\text{GapSVP}_{\gamma}$ problem for large $q \geq 2^{n/2}$ \cite{peikert2009_SVP_GapSVP}. The distribution of the $m$-tuple $(a_1,b_1),(a_2,b_2), \ldots,(a_m,b_m)$ is called the $\mathcal{LWE}_{n,q,\chi}$ distribution and the assumption that the $\mathcal{LWE}_{n,q,\chi}$ is hard is referred to as the $\mathcal{LWE}_{n,q,\chi}$ assumption.\par
The LWE problem can be used to design a cryptosystem having multiple keys for multiple participants. It is explained in section \ref{LWE_multi_key_cryptosys}.

\section{LWE based Multi Recipient Cryptosystems
}
\label{LWE_multi_key_cryptosys}
In this section, we propose an LWE-based multi recipient encryption scheme. The proposed scheme considers a configuration with a single sender and multiple receivers. Each receiver has a secret key shared with the sender. Here, when the number of receivers is $m$, an $m$-tuple of message streams is encrypted using an $m$-tuple of secret keys (one corresponding to each receiver) to produce a single ciphertext. Each recipient can recover their intended message using their secret key.

 We start by defining a pseudorandom map, which is the building block for the multi-recipient scheme. 

% \subsection{Pseudorandom Map (PRM)}
\begin{defn}
\label{func_generated_map}
    A function $f_{s,\chi} : \mathbb{F}_q^m \rightarrow \mathbb{F}_q$ indexed by an arbitrary chosen vector $s \xleftarrow{\scriptscriptstyle{\$}} \mathbb{F}_q^m$ and a distribution $\chi$ is a function-generated map defined as
    \begin{equation}
        f_{s,\chi}\left(v\right) := \left(\left \langle s,v \right \rangle + e \right)\,\, ( \text{mod}\,\,q)
    \end{equation}
    where $v \xleftarrow{\scriptscriptstyle{\$}} \mathbb{F}_q^m$, $e \in \mathbb{F}_q$ is sampled from distribution $\chi$.
\end{defn}
\begin{defn}
\label{PRM_defn}
A map $f : S \times \mathbb{F}_q^m \rightarrow \mathbb{F}_q^m$ indexed by elements of set $S$ and distribution $\chi$ is said to be pseudorandom map (\textit{PRM}) if the following properties hold:

% a set of finite maps indexed by a set $S \in U\left(\mathbb{F}_q^{n \times n}\right)$ defined as
% \begin{equation}
%     \{f_s : \mathbb{F}_q^n \rightarrow \mathbb{F}_q^n\}_{s \in U\left(\mathbb{F}_q^n\right)}
% \end{equation}

% and the following axioms hold:
\begin{enumerate}
    \item The map $f$ is efficiently computable. 
    
    % such that on input $s$ and a randomly chosen vector $v \in U\left(\mathbb{F}_q^n\right)$ returns $f_s\left(v\right)$.
    
    \item For an element $s \xleftarrow{\scriptscriptstyle{\$}} S$ the output of $f(s,\bullet)$ is computationally indistinguishable from a randomly sampled element from $\mathbb{F}_q^m$. In other words, the advantage of an adversary algorithm $\mathcal{A}$ with oracle access to $f$, in distinguishing the output of $f$ from a randomly sampled element of $\mathbb{F}_q^m$ is bounded by a negligible function $\epsilon$ of $m$.
    \begin{align}
        Adv(\mathcal{A}) &= \left|Pr\left[\mathcal{A}\left(f(s,.) \right)=1 \right] - Pr\left[\mathcal{A}\left(U(\mathbb{F}_q^m) \right)=1 \right] \right| \notag \\
        &\leq \epsilon(m)
    \end{align}
    % and a distribution $\chi$, the map $f_{\chi}$ is indistinguishable from a randomly chosen map from $\mathbb{F}_q^m \rightarrow \mathbb{F}_q^m$.
    
    % \item The map ensemble $F = \{F_n \}_{n \in \mathbb{N}}$ defined such that $F_n$ is uniformly distributed over the multi-set $\{f_s\}_{s \in U\left(\mathbb{F}_q^n\right)}$, is pseudorandom.
\end{enumerate}
\end{defn}
The only difference between a \textit{PRM} and a pseudorandom function (\textit{PRF}) is that, in a \textit{PRM} each input can potentially have multiple outputs.\par
% we do not insist on the map being a function i.e. it need not be a $1-1$ map.
% Given a \textit{PRM} $f_{\chi} : S \times \mathbb{F}_q^m \rightarrow \mathbb{F}_q^m$ we can extend it to a map $\mathcal{F}_{S,\chi} : \mathbb{F}_q^m \rightarrow \mathbb{F}_q^{m \times l}$ in a recursive way using Algorithm \ref{alg:F}. The map $\mathcal{F}_{S,\chi}$ is defined as
% to a recursive \textit{PRM} $\mathcal{F}: \mathbb{F}_q^n \rightarrow \mathbb{F}_q^{m \times n}$ defined as
% \begin{equation}
        % \mathcal{F}\left(v \right) := \left[v_1,v_2,\ldots,v_m \right]^T
% \mathcal{F}_{S,\chi}\left(s,v \right) := \left[g_{1},g_{2},\ldots,g_{l} \right]
% \begin{bmatrix}
% v_1 \\
% v_2 \\
% v_3 \\
% \vdots  \\
% v_m \\
% \end{bmatrix}
% \end{equation}
% where $v \xleftarrow{\scriptscriptstyle{\$}} \mathbb{F}_q^m$ and   
% \begin{align}
%     g_1 &= f_{\chi}\left(.,v\right) \notag \\
%     g_2 &= f_{\chi}\left(.,g_1\right) \notag \\
%     \vdots \notag \\
%     g_l &= f_{\chi}\left(.,g_{l-1} \right)
{The following theorem demonstrates how a \textit{PRM} $f:S \times \mathbb{F}_q^m \rightarrow \mathbb{F}_q^m$ can be used recursively to construct a \textit{PRM} $\mathcal{F}:S \times \mathbb{F}_q^m \rightarrow \mathbb{F}_q^{m \times l}$. }
\begin{comment}

Based on this, we propose Algorithm \ref{alg:F} which proceeds as follows:
\begin{algorithm}[H]
\caption{\it Recursive Pseudorandom Map $\mathcal{F}_{\chi}$}\label{alg:F}
\begin{algorithmic}[1]
\STATE \textbf{Input:}\,\,$v$\\
\STATE \textbf{Output:} $g_1,g_2,\ldots,g_l$ \\
\STATE Set $g_0 = v$\\
\FOR{$i=1:l$}
    \STATE $g_{i} = f_{\chi}^{i}\left(.,g_{i-1}\right)\,\,\left(\text{mod}\,\,q \right)$
\ENDFOR
\STATE \textbf{\textit{return $\left[g_1,g_2,\ldots,g_l \right]$}}
    
\end{algorithmic}
\end{algorithm}

% Mathematically, the recursive \textit{PRM} is defined as
% \begin{equation}
%          v_{i} = f_s^{i}(v)\,\,\left(mod\,\,q\right),\,\,\text{for}\,\,i = 1,2,\ldots,m
% \end{equation}
% where $s,v_i \in U\left(\mathbb{F}_q^n  \right)$. 
\begin{figure}[h]
    \centering
    \includegraphics[width=0.5\textwidth]{}
    \caption{Construction of recursive pseudorandom map}
    \label{fig:PRM_block}
\end{figure}
\end{comment}
In the theorem below, we are introducing an arbitrary matrix $M$ which corresponds to the message in the encryption scheme described in Section \ref{sec_enc_scheme}.

\begin{thm}
\label{thm_PRM}
    Consider a PRM $f:S \times \mathbb{F}_q^m \rightarrow \mathbb{F}_q^m$. 
    Let $M = \left[ m_1,m_2,\ldots,m_l \right] \in \mathbb{F}_q^{m \times l}$, be an arbitrary matrix. 
    Define $\mathcal{F}_{M,S}: \mathbb{F}_q^m \rightarrow \mathbb{F}_q^{m \times l}$ as
    \begin{eqnarray}
    \label{eqn_PRM}
        \mathcal{F}_{M,S}(s,g_0) &=& (g_1,g_2,\ldots,g_l), \notag\\
        \textrm{ and } g_i &=& f(s,g_{i-1}) + m_i,
    \end{eqnarray}
    where  $g_0 \xleftarrow{\scriptscriptstyle{\$}} \mathbb{F}_q^m$ and $s$ is randomly sampled from uniform distributions over $S$, independent of each other and the columns of $M$. Then $\mathcal{F}_{M,S}$ is indistinguishable from a randomly sampled map from $\mathbb{F}_q^m \rightarrow \mathbb{F}_q^{m \times l}$.
\end{thm}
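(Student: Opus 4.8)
The plan is to reduce the claim to the pseudorandomness of $f$ by a short hybrid argument. First I would replace, in a single step, all $l$ invocations of $f(s,\cdot)$ appearing in the recursion \eqref{eqn_PRM} by invocations of one truly random map $\psi:\mathbb{F}_q^m\to\mathbb{F}_q^m$, obtaining the intermediate object $\widetilde{\mathcal{F}}_M$ which on input $g_0$ returns $(g_1,\ldots,g_l)$ with $g_i=\psi(g_{i-1})+m_i$. Any distinguisher $\mathcal{A}$ separating $\mathcal{F}_{M,S}$ from $\widetilde{\mathcal{F}}_M$ then yields a PRM-adversary $\mathcal{A}'$ against $f$: given an oracle $\mathcal{O}$ that is either $f(s,\cdot)$ or a random map, $\mathcal{A}'$ answers each query $g_0$ of $\mathcal{A}$ by computing $g_i=\mathcal{O}(g_{i-1})+m_i$ for $i=1,\ldots,l$ and returning $(g_1,\ldots,g_l)$, and finally it echoes $\mathcal{A}$'s output bit. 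When $\mathcal{O}=f(s,\cdot)$ this perfectly simulates $\mathcal{F}_{M,S}$, and when $\mathcal{O}$ is random it perfectly simulates $\widetilde{\mathcal{F}}_M$; hence this gap is at most $\epsilon(m)$ by Definition \ref{PRM_defn}.

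Next I would show that $\widetilde{\mathcal{F}}_M$ is \emph{statistically} indistinguishable from a uniformly random map $\mathbb{F}_q^m\to\mathbb{F}_q^{m\times l}$. The key observation is that $g_i=\psi(g_{i-1})+m_i$, and since $\psi$ is a random map while the columns $m_i$ of $M$ are fixed and independent of $\psi$ (and of $s$ and $g_0$), each $g_i$ is uniform in $\mathbb{F}_q^m$ — and independent of everything revealed so far — provided $g_{i-1}$ is a point at which $\psi$ has not already been evaluated. Over any polynomial number $Q$ of queries the set of chain arguments, of size at most $lQ$, is pairwise distinct except with probability $O((lQ)^2/q^m)$, which is negligible since $q^m$ is super-polynomial in $m$; conditioned on that event, each query of $\widetilde{\mathcal{F}}_M$ returns an independent uniform element of $\mathbb{F}_q^{m\times l}$, i.e.\ exactly the output of a random map. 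Combining the two steps gives $Adv(\mathcal{A})\le\epsilon(m)+O((lQ)^2/q^m)$, which is negligible, and the theorem follows.

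The only genuinely delicate point — the rest is bookkeeping — is that the arguments fed into $f$ at layers $2,\ldots,l$ are the earlier outputs $g_1,\ldots,g_{l-1}$, which are neither uniform nor independent of the secret index $s$, so the pseudorandomness of $f$ cannot be applied to a single layer in isolation. The wholesale substitution above avoids this by enclosing all $l$ calls to $f(s,\cdot)$ inside one PRM-distinguishing experiment, which is legitimate precisely because Definition \ref{PRM_defn} equips the adversary with oracle access to $f$, so $\mathcal{A}'$ can itself evaluate the later layers while simulating $\mathcal{A}$. A related subtlety worth stating explicitly is that $f$, and therefore $\mathcal{F}_{M,S}$, is a genuine \emph{map} rather than a function — re-querying the same argument can yield a different image because of the fresh noise $e$ — so the ``random map'' in the conclusion is to be understood as resampled on each evaluation, which is exactly why the collision bound in the second step is needed rather than vacuous. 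If $\mathcal{F}_{M,S}$ is only ever evaluated on one fresh uniform $g_0$, as in the encryption scheme of Section \ref{sec_enc_scheme}, the argument specializes to $Q=1$ with collision term $O(l^2/q^m)$.
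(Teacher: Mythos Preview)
Your argument is sound under a natural reading of Definition~\ref{PRM_defn}, but it takes a genuinely different route from the paper and leans on a stronger form of PRM security than the paper itself invokes. The paper runs an $l$-step hybrid $H_0,\ldots,H_l$: in $H_k$ the vectors $g_0,\ldots,g_k$ are fresh uniform while $g_{k+1},\ldots,g_l$ are produced by the recursion, so adjacent hybrids differ only at position $k$, where the input $g_{k-1}$ is genuinely uniform and independent of $s$. A \emph{single-challenge} PRM distinguisher --- one pair $(z,h)$ with $h$ either $f(s,z)$ or random, together with oracle access to the \emph{real} $f(s,\cdot)$ to generate the later layers --- then bounds $|\mathcal{P}_{k-1}-\mathcal{P}_k|$, and the triangle inequality gives $l\epsilon(m)$.

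Your wholesale replacement instead treats Definition~\ref{PRM_defn} as full oracle indistinguishability (the adversary's oracle is either $f(s,\cdot)$ or a random map throughout), and under that reading your reduction is valid and even tighter, losing $\epsilon(m)$ rather than $l\epsilon(m)$. Under the paper's single-challenge reading, however, your $\mathcal{A}'$ cannot simulate $\widetilde{\mathcal{F}}_M$: it is handed only one random challenge but would need $l$ of them, and you are forced back to a layer-by-layer hybrid. This is precisely the ``delicate point'' you flag; the paper resolves it not by wrapping all $l$ calls into one experiment but by arranging, via the hybrid, that the input at the single layer under examination is already uniform. One minor note: since you yourself stipulate that the random map is ``resampled on each evaluation'', every $\psi(g_{i-1})$ is fresh uniform regardless of $g_{i-1}$, so the collision bound in your second step is superfluous under your own convention --- that step is in fact trivial, and the $O((lQ)^2/q^m)$ term would only be needed if $\psi$ were a fixed random function.
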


\begin{comment}
\begin{thm}
\label{thm_PRM}
    Consider a PRM $f:S \times \mathbb{F}_q^m \rightarrow \mathbb{F}_q^m$. 
    Let $M = \left[ m_1,m_2,\ldots,m_l \right] \in \mathbb{F}_q^{m \times l}$, where $m_i \in \mathbb{F}_q^m$ be an arbitrary string and $S$ be a set. 
    Define $\mathcal{F}_{M,S}: \mathbb{F}_q^m \rightarrow \mathbb{F}_q^{m \times l}$ as
    \begin{eqnarray}
    \label{eqn_PRM}
        \mathcal{F}_{M,S}(s,g_0) &=& (g_1,g_2,\ldots,g_l), \notag\\
        \textrm{ and } g_i &=& f(s,g_{i-1}) + m_i,
    \end{eqnarray}
    where $g_0 \xleftarrow{\scriptscriptstyle{\$}} \mathbb{F}_q^m$ and $s$ is randomly sampled from uniform distributions over $S$, independent of each other and columns of $M$. Then $\mathcal{F}_{M,S}$ is indistinguishable from a randomly sampled map from $\mathbb{F}_q^m \rightarrow \mathbb{F}_q^{m \times l}$.
\end{thm}
\end{comment}
\begin{proof}
    Construct a series of hybrid experiments $H_0,H_1,\ldots,H_l$ where in the experiment $H_k$, the vectors $(g_0,g_1,\ldots,g_k) \in \mathbb{F}_q^{m \times (k+1)}$ are randomly sampled from uniform distribution over $\mathbb{F}_q^m$ and $g_i$ for $i = k+1,k+2,\ldots,l$ are calculated as follows:
    \begin{equation}
        g_i = f(s,g_{i-1}) + m_i
    \end{equation}
    The output of $H_k$ is $(g_1,\ldots,g_l) \in \mathbb{F}_q^{m \times l}$. {Observe that the output of the experiment $H_0$ corresponds to the output of map $\mathcal{F}_{M,S}$} and $H_l$ corresponds to the output of a map randomly sampled from a uniform distribution on the set of maps from $\mathbb{F}_q^m$ to $\mathbb{F}_q^{m \times l}$.\par
    Consider a binary algorithm $\mathcal{A}$ that accepts elements of $\mathbb{F}_q^{m \times l}$ as inputs and aims to distinguish between the experiments $H_0$ and $H_l$. Let 
    \begin{equation}
        \mathcal{P}_k = Pr\left[\mathcal{A}(H_k) = 1 \right]
    \end{equation}
    be the probability that $\mathcal{A}$ returns $1$ when it takes elements generated from the experiment $H_k$ as the input. If for any $k$, algorithm $\mathcal{A}$ can distinguish between samples from $H_{k-1}$ and $H_k$, then $\mathcal{A}$ can be used to construct an adversary $\mathcal{A}'$, with oracle access to $f(s,\bullet)$ that can distinguish outputs of $f(s,\bullet)$ from that of a truly random map. Let the input of $\mathcal{A}'$ be $(z,h)$  where $z$ is randomly sampled from $U(\mathbb{F}_q^m)$ and $h$ is either $f(s,z)$ or $f'(z)$ where $f'$ is randomly sampled from the set of maps from $\mathbb{F}_q^m$ to $\mathbb{F}_q^m$. Generate the vectors 
    \begin{align}
        g_k &= h + m_k \notag \\
        g_{k+1} &= f(s,g_k) + m_{k+1} \notag \\
        \vdots \notag \\
        g_l &= f(s,g_{l-1}) + m_l \notag \\
    \end{align}
    using oracle access to $f(s,\bullet)$. Further, randomly sample vectors $(g_1,\ldots,g_{k-2})$ from a uniform distribution on $\mathbb{F}_q^m$ and let $g_{k-1}=z$. Observe that if $h=f(s,z)$, then the distribution of the vectors $(g_1,\ldots,g_l)$ is identical to the distribution of the output of $H_{k-1}$. On the other hand, if $h=f'(z)$, then the distribution of this set of vectors is identical to that of the output of $H_{k}$. Thus, if $\mathcal{A}'$ can parse the vectors $(g_1,\ldots,g_l)$ to $\mathcal{A}$ which can distinguish between the output distributions of $H_k$ and $H_{k-1}$ with a significant advantage then, $\mathcal{A}'$ can distinguish between $f(s,\bullet)$ and a randomly sampled map. This contradicts the pseudorandomness of $f(s,\bullet)$. Hence, $\left|\mathcal{P}_{k} - \mathcal{P}_{k-1} \right| < \epsilon(m)$ where  $\epsilon$ is a function that is negligible in $m$.\par
    Now, the advantage of $\mathcal{A}$ in distinguishing between outputs of $H_0$ and $H_l$ is given by
    \begin{align}
        Adv(\mathcal{A}) &= \left|\mathcal{P}_{0} - \mathcal{P}_{l} \right| \notag \\
        &= \left|\mathcal{P}_{0} - \mathcal{P}_{1} + \mathcal{P}_{1} - \mathcal{P}_{2} + \ldots + \mathcal{P}_{l-1} - \mathcal{P}_{l} \right| \notag \\
        &\leq \left|\mathcal{P}_{0} - \mathcal{P}_{1} \right| + \left|\mathcal{P}_{1} - \mathcal{P}_{2} \right| + \ldots + \left|\mathcal{P}_{l-1} - \mathcal{P}_{l} \right| \notag \\
        &\leq l\epsilon(m)
    \end{align}
    Hence, $Adv(\mathcal{A})$ is negligible in $m$.
\end{proof}

The following subsection explains the construction of a pseudorandom map based on the hardness of the LWE problem.

\subsection{LWE-based Pseudorandom Map (LWE-PRM)}
In this section, we use the LWE problem for the construction of a \textit{PRM}. We refer to such a PRM as an {\em LWE-PRM}. Here, multiple secret vectors $s_k \in \mathbb{F}_q^m$ for $k=1,2,\ldots,m$ are sampled randomly and a vector $v \xleftarrow{\scriptscriptstyle{\$}}  \mathbb{F}_q^m$ randomly sampled such that the addition of noise $e \in \mathbb{F}_q$ sampled from distribution $\chi$ with the inner product of secret vector $s_k$ and the vector $v$ gives $m$ elements in $\mathbb{F}_q$.\par
The following theorem is a special case of Lemma 6.2 from Peikert et al. \cite{peikert2008lossy}, which is instrumental in the development of our results. This theorem proves the indistinguishability property of the \textit{LWE-PRM} from a randomly sampled map from $\mathbb{F}_q^m \rightarrow \mathbb{F}_q^m$. 
% \begin{defn}
% \label{PQ_PRM_defn}
% Given a secret matrix $S = \left[ s_1,s_2, \ldots, s_m \right]^T \in \mathbb{F}_q^{m \times m}$ with each secret vector randomly chosen $s_k \xleftarrow{\scriptscriptstyle{\$}}  \mathbb{F}_q^m$ for $k = 1,2, \ldots, m$. Let a randomly chosen vector $v \xleftarrow{\scriptscriptstyle{\$}}  \mathbb{F}_q^m$ and an error $e_k \in \mathbb{F}_q$ sampled from distribution $\chi$. A map $f_{\chi} : S \times \mathbb{F}_q^m \rightarrow \mathbb{F}_q^m $ indexed by randomly choosing elements in $S$ defined as
% \begin{equation}
% \label{f_S_prm}
%     f_{S,\chi}^k\left(v \right) := \left( \left \langle S,v \right \rangle + e_k \right) \,\,\left(\text{mod}\,\,q\right),\,\,\text{for}\,\, k = 1,2,\ldots,n
% \end{equation}
% \begin{align}
% \label{f_S_prm}
% f_{\chi}\left(v \right) &:= \left \langle S,v \right \rangle + E\,\,\left(\text{mod}\,\,q\right) \\
% &=\begin{bmatrix}
% \left \langle s_1,v \right \rangle + e_1 \\
%  \left \langle s_2,v \right \rangle + e_2 \\
%  \left \langle s_3,v \right \rangle + e_3 \\
% \vdots  \\
%  \left \langle s_m,v \right \rangle + e_m \\
% \end{bmatrix}\,\,\left(\text{mod}\,\,q\right) \notag
% \end{align}
% \end{defn}
% The map $f_{\chi}$ is efficiently computable. To prove axiom 2 of Definition \ref{PRM_defn}, the following theorem is proposed:
\begin{lemma}
\label{thm_LWE_PRM}
    Consider $S = \left[ s_1, s_2, \ldots, s_m \right]^T \in \mathbb{F}_q^{m \times m}$ with $s_k \xleftarrow{\scriptscriptstyle{\$}} \mathbb{F}_q^m$ for $k = 1,2, \ldots, m$ to be a random secret vector matrix and $v \xleftarrow{\scriptscriptstyle{\$}}  \mathbb{F}_q^m$ be a randomly chosen vector. Define the map $f_{\chi}\left(\bullet,v \right)$ such that 
    \begin{align}
    \label{f_S_prm}
        f_{\chi}\left(S,v \right) &:= (S \times v + E)\,\,\left(\text{mod}\,\,q\right) \\
        &=\begin{bmatrix}
        \left \langle s_1,v \right \rangle + e_1 \\
        \left \langle s_2,v \right \rangle + e_2 \\
        \left \langle s_3,v \right \rangle + e_3 \\
        \vdots  \\
        \left \langle s_m,v \right \rangle + e_m \\
        \end{bmatrix}\,\,\left(\text{mod}\,\,q\right) \notag
\end{align}
with error vector $E = \left[e_1,e_2,\ldots,e_m \right]^T \in \mathbb{F}_q^m$ and $e_k \in \mathbb{F}_q$ sampled from distribution $\chi$ for $k = 1,2,\ldots,m$. Then, under the $\mathcal{LWE}_{m,q,\chi}$ assumption$, f_{\chi}\left(\bullet,v \right)$ is a pseudorandom map.
\end{lemma}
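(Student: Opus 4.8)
The plan is to verify the two defining properties of a pseudorandom map from Definition \ref{PRM_defn}. Property (1), efficient computability, is immediate: evaluating $f_{\chi}(S,v)$ is one matrix–vector product $Sv$ over $\mathbb{F}_q$ together with $m$ independent samples from $\chi$ and a reduction mod $q$, all polynomial-time. The substance of the lemma is property (2): for a uniformly random secret matrix $S$, the outputs of $f_{\chi}(S,\bullet)$ are computationally indistinguishable from uniform over $\mathbb{F}_q^m$. I would establish this by a reduction to the decision $\mathcal{LWE}_{m,q,\chi}$ assumption, in the spirit of Lemma 6.2 of \cite{peikert2008lossy}.

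The core is a hybrid argument over the rows of $S = [s_1,\dots,s_m]^T$. For $k=0,1,\dots,m$ define a hybrid oracle $H_k$ which, on a uniformly random input $v$, returns the vector whose first $k$ coordinates are sampled independently and uniformly from $\mathbb{F}_q$ and whose remaining coordinates are $\langle s_j,v\rangle + e_j \bmod q$ for $j=k+1,\dots,m$, with $e_j \leftarrow \chi$ fresh and $s_{k+1},\dots,s_m$ the corresponding rows of $S$. Then $H_0$ is exactly the oracle $f_{\chi}(S,\bullet)$, while $H_m$ returns a uniformly random element of $\mathbb{F}_q^m$ on every query, i.e. the reference object of Definition \ref{PRM_defn}. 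Consecutive hybrids $H_{k-1}$ and $H_k$ differ only in coordinate $k$: it follows the rule $\langle s_k,v\rangle + e_k$ in $H_{k-1}$, versus uniform in $H_k$.

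For the reduction step, suppose an adversary $\mathcal{A}$ distinguishes $H_{k-1}$ from $H_k$ with advantage $\delta$. Build a decision-$\mathcal{LWE}$ distinguisher $\mathcal{A}'$: given a challenge batch $(a_1,b_1),\dots,(a_t,b_t)\in\mathbb{F}_q^{m+1}$ (either $\mathcal{LWE}$-distributed with some secret $s$, or uniform), $\mathcal{A}'$ answers $\mathcal{A}$'s $i$-th oracle query with input $v_i := a_i$ and output vector whose first $k-1$ coordinates are fresh uniform samples, whose $k$-th coordinate is $b_i$, and whose coordinates $k+1,\dots,m$ are $\langle s'_j,a_i\rangle + e'_j \bmod q$ using secrets $s'_{k+1},\dots,s'_m$ that $\mathcal{A}'$ samples uniformly for itself and fresh noise $e'_j \leftarrow \chi$. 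If the challenge is $\mathcal{LWE}$-distributed then $\mathcal{A}$'s view equals that under $H_{k-1}$ (with $s_k := s$); if uniform, it equals that under $H_k$. Hence $\mathcal{A}'$ has advantage $\delta$ against decision $\mathcal{LWE}_{m,q,\chi}$, so $\delta$ is negligible in $m$. By the triangle inequality over the $m$ hybrid steps, the advantage of any adversary in distinguishing $H_0 = f_{\chi}(S,\bullet)$ from $H_m$ is at most $m$ times a negligible function, hence negligible in $m$, proving property (2).

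The only real delicacy is reconciling the oracle-access model of Definition \ref{PRM_defn} with the sample-access model of decision $\mathcal{LWE}$, in which the vectors $a_i$ are supplied uniformly at random rather than chosen by the adversary. This is harmless here because in every downstream use — the recursive construction of Theorem \ref{thm_PRM} and the encryption scheme — $f_{\chi}(\bullet,v)$ is only evaluated on inputs that are (pseudo)uniform, so the relevant notion is pseudorandomness under uniformly random inputs, which is exactly what the reduction delivers; I would state this restriction explicitly. A secondary point worth one sentence: the rows $s_1,\dots,s_m$ are mutually independent, so swapping them one at a time in the hybrids is legitimate, and decision $\mathcal{LWE}_{m,q,\chi}$ hardness is invoked with a sample count $t$ equal to the (polynomially bounded) number of oracle queries, which is standard.
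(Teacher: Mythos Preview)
Your proposal is correct and follows essentially the same hybrid argument as the paper's proof: both define hybrids $H_0,\dots,H_m$ that switch one output coordinate at a time to uniform (the paper phrases this as replacing the error term $e_k$ by a uniform sample, which is equivalent) and reduce consecutive-hybrid distinguishability to decision $\mathcal{LWE}_{m,q,\chi}$. Your write-up is in fact more explicit than the paper's---you spell out the reduction $\mathcal{A}'$ and flag the oracle-versus-sample subtlety, neither of which the appendix proof addresses.
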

\begin{proof}
    % For a randomly chosen $S$, the map $f_{S,\chi}^k : \mathbb{F}_q^n \rightarrow \mathbb{F}_q^n$ is efficiently computable.
    See in the Appendix.
\end{proof}
% \begin{rem}
%     This result in Theorem \ref{thm_PQ_PRM} is a particular case of Peikart et. al \cite{peikert2008lossy} Lemma 6.2 and the line of proof follows the same.
% \end{rem}
For the sake of completeness, a brief proof of Lemma \eqref{thm_LWE_PRM} is given in the appendix.

We now proceed to describe the proposed multi-recipient encryption scheme based on the pseudorandom map mentioned in Lemma \eqref{thm_LWE_PRM}.
\subsection{Multi-Key Multi-Recipient (MKMR) Encryption Scheme}
\label{sec_enc_scheme}
The  multi-recipient encryption scheme described in this subsection is a symmetric encryption scheme involving multiple messages encrypted using independent secret keys to generate a single ciphertext stream. This stream is then decrypted by the recipients using their respective secret keys to recover the message intended for them. Let the number of recipients be $m$. Corresponding to each recipient there is a unique secret key $s_k \xleftarrow{\scriptscriptstyle{\$}} \mathbb{F}_q^m$ for $k = 1,2,\ldots,m$.  The proposed encryption scheme consists of the following steps:
% Using $I_v$ and recursively applying equation \eqref{eqn_PRM} for each $m$ unique message streams i.e. $\left[ m_1,m_2,\ldots,m_m \right]^T$, results in a sequence of ciphertext vectors $\mathcal{C} = \left[v_1,v_2, \ldots,v_m \right]^T$ where $v_i \in \mathbb{F}_q^n$ for $i = 1,2,\ldots,m$. Each ciphertext vector $v_i$ is represented as
% \par

% $m_{i,k}$ is the $i^{th}$ row of $k^{th}$ message stream and $v_{i,k}$ is the ciphertext corresponding to $m_{i,k}$ for  $i = 1,2,\ldots,m$ and $k= 1,2,\ldots,l$.\par

\begin{itemize}
    \item \textbf{Setup($\bm{\lambda}$):} This algorithm takes the security parameter $\lambda$ as input and returns an  integer $m$, a prime $q$ and a discrete Gaussian distribution $\chi$ over $\mathbb{F}_q$.
    \item \textbf{KeyGen($m,q$):} 
    \label{alg:key_gen}
    The algorithm \textit{KeyGen(.)} takes $m$ and $q$ generated by \textbf{Setup($\bullet$)} and outputs $m$ secret keys $s_1,s_2,\ldots,s_m$ sampled from a uniform distribution over $\mathbb{F}_q^m$.
    % Let $\lambda$ be the security parameter, $m(\lambda) \in \mathbb{N}$ is the length of a vector in $\mathbb{F}_q^m$ that is polynomial in $\lambda$, $q(\lambda)$ is the order of finite field $\mathbb{F}_q$ that is polynomial in $\lambda$. The \textit{KeyGen(.)} algorithm takes input $\lambda$ and outputs $m$ secret keys $s_1,s_2,\ldots,s_m$ sampled from $\mathbb{F}_q^m$.
    
    % The KeyGen(.) algorithm takes input $\left (lambda, n , p\right)$ and outputs key $K_i$ for $i = 1, 2,...,p$ with each $K_i$ of 512-bit length.
    \item \textbf{Encrypt} (denoted as \textit{\textbf{Enc$\bm{\left(s_1,s_2,\ldots,s_m, \mathcal{M} \right)}$}}):
    Consider a $m$ recipient encryption scheme where each recipient has a message stream of length $l$. The message stream corresponding to the $j^{th}$ recipient is $\begin{bmatrix}
        m_{j1} & m_{j2} & \ldots & m_{jl} \\
        \end{bmatrix} \in \mathbb{F}_q^l$. Now stacking these message streams together gives a message matrix $\mathcal{M} \in \mathbb{F}_q^{m \times l}$ which is represented as 
    % Consider a $l-$length message stream $\begin{bmatrix}
    %     m_{11} & m_{12} & \ldots & m_{1l} \\
    %     \end{bmatrix}$ where $m_{1i} \in \mathbb{F}_q$ for $i=1,2,\ldots,l$. For $m$ such message streams of length $l$ each, the message space $\mathcal{M} \in \mathbb{F}_q^{m \times l}$ is represented as
    \begin{equation}
        \mathcal{M} = \begin{bmatrix}
            m_{11} & m_{12} & \ldots & m_{1l} \\
            m_{21} & m_{22} & \ldots & m_{2l} \\
            \vdots & \vdots & \ddots  & \vdots \\
            m_{m1} & m_{m2} & \ldots & m_{ml} \\
            \end{bmatrix} 
    \end{equation}  
       Let $m_i$ be the $i^{th}$ column of the matrix $\mathcal{M}$. The ciphertext is a sequence of vectors $v_0,v_1,\ldots,v_{l}$ in $\mathbb{F}_q^m$. The vector $v_0$ is the initialization vector randomly sampled from a uniform distribution over $\mathbb{F}_q^m$. The remaining vectors are recursively generated as follows,
        \begin{align}
        \label{mkmr_ciphertext_vector}
            v_{i} &=  m_{i} + f_{\chi}\left(S,v_{i-1}\right)\,\,\left(\text{mod}\,\,q\right) \notag \\
            &= m_i + \begin{bmatrix}
        \left \langle s_1,v_{i-1} \right \rangle + e_1 \\
        \left \langle s_2,v_{i-1} \right \rangle + e_2 \\
        \left \langle s_3,v_{i-1} \right \rangle + e_3 \\
        \vdots  \\
        \left \langle s_m,v_{i-1} \right \rangle + e_m \\
        \end{bmatrix}\,\,\left(\text{mod}\,\,q\right)
        \end{align}
where $E_i = [e_1,e_2,\ldots,e_m]^T \in \mathbb{F}_q^m$ is randomly sampled from the distribution $\chi$. The encryption procedure is described in Algorithm \eqref{alg:ciphertext_algo}.
\begin{algorithm}[H]
\caption{\it Enc($\bullet$)}\label{alg:ciphertext_algo}
\begin{algorithmic}[1]
\STATE \textbf{Input:}\,\,$s_1,s_2,\ldots,s_m, \mathcal{M} = \begin{bmatrix}
        m_{1} & m_{2} & \ldots & m_{l} \\
        \end{bmatrix}$
\STATE \textbf{Output:} $\mathcal{C}$ \\
% \STATE Set $n$
\STATE $I_v \xleftarrow{\scriptscriptstyle{\$}} \mathbb{F}_q^{m}$
\STATE Set $v_0 = I_v$ \\
%\STATE Set $v_0 = I_v$ \\
% \STATE $[m,l] = size(M)$ \\
% \STATE Load $m$ message streams\\
% \FOR{$k=1:m$}
%     \STATE $m_k = msg_k$ \hspace{1em} // Save $k^{th}$ message stream in $k^{th}$ variable\\
% \ENDFOR
% \STATE Set $L = max\_length\left(m_1,m_2,...,m_n \right)$\\
% \STATE $s = SV\_Gen(K, q, n)$ \hspace{1em} // Call Secret Vector Generation Algorithm \ref{alg:SV_Gen_algo} \\
% \STATE $M = \left[m_1, m_2, \ldots,m_m \right]^T$ \\
% \STATE $[m,l] = size(M)$ \\
% \STATE $S = Key\_Gen(\lambda)$ \hspace{1em} // Call Key Generation Algorithm \\

% \FOR{$k = 1:m$}
    \STATE Set $i = 1$ \\
    \WHILE{$i < l$}
    % \STATE Set $k=1$
    
        % \STATE $e = NoiseSmpl\left( \sigma, q \right)$\hspace{0.5em}// Call NoiseSmpl Algorithm \ref{alg:Gaussian_Distribution} \\
        % \STATE $v(k+1) = m_{i}(k) + \left \langle s_i, v(k) \right \rangle + e\,\,(mod\,\,q)$
        \STATE $E_i\xleftarrow{\scriptscriptstyle{\$}} \chi^m$
        \STATE $v_{i} =  m_i + \begin{bmatrix}
        \left \langle s_1,v_{i-1} \right \rangle + e_1 \\
        \left \langle s_2,v_{i-1} \right \rangle + e_2 \\
        \left \langle s_3,v_{i-1} \right \rangle + e_3 \\
        \vdots  \\
        \left \langle s_m,v_{i-1} \right \rangle + e_m \\
        \end{bmatrix}\,\,\left(\text{mod}\,\,q\right)$
        % \STATE $v_i = \left[c_{k,i} \right]$
        \STATE Update $i = i+1$
    \ENDWHILE 
% \ENDFOR
% \STATE $C = \left[v_1, v_2,..., v_L \right]^T$
\STATE $\mathcal{C} = \begin{bmatrix}
        v_0 & v_1 & \ldots & v_{l} \\
        \end{bmatrix}$
\STATE \textbf{\textit{return $\mathcal{C}$}}
\end{algorithmic}
\end{algorithm}
    \item \textbf{Decrypt} (denoted as \textit{\textbf{Dec$\bm{\left(s_j, \mathcal{C} \right)}$}}): The decryption process is the reverse of encryption. Given the secret key $s_j$, the $j^{th}$ receiver recursively recovers its message stream $[m_{j1},m_{j2},\ldots,m_{jl}]$ as follows:
    \begin{equation}
        m_{ji}  = v_{ji} -\begin{bmatrix}
        \left \langle s_j,v_{i-1} \right \rangle 
        \end{bmatrix}\,\,\left(\text{mod}\,\,q\right), \textrm{ for }1 \leq i < l
    \end{equation}
    where $v_{ji}$ is the $j^{th}$ entry of the vector $v_i$. It can be easily verified that the recovered message is the same as the original message with some added noise. The decryption process for the $j^{th}$ receiver is given in Algorithm \eqref{alg:decrypt_algo}.
\begin{algorithm}[H]
\caption{\it Dec($\bullet$)}\label{alg:decrypt_algo}
\begin{algorithmic}[1]
\STATE \textbf{Input:}\,\,$s_j,\mathcal{C} = \begin{bmatrix}
        v_0 & v_1 & \ldots & v_{l} \\
        \end{bmatrix}$ \\
\STATE \textbf{Output:} $\mathcal{M}_j = \begin{bmatrix}
        m_{j1} & m_{j2} & \ldots & m_{jl} \\
        \end{bmatrix}$ \\

% \STATE Set $n$
% \STATE Set $v_1 = IV$
% \STATE $L = max\_length\left( v_1, v_2,..., v_n \right)$
% \STATE $s = SV\_Gen(K, q, n)$ \hspace{1em} // Call Secret Vector Generation Algorithm \ref{alg:SV_Gen_algo}\\
% \STATE $S = Key\_Gen(\lambda)$ \hspace{1em} // Call Key Generation Algorithm \\
% \FOR{$i=1:m$}
\STATE Set $i=1$
    \WHILE{$i < l$}
        \STATE $m_{ji}  = v_{ji} -\begin{bmatrix}
        \left \langle s_j,v_{i-1} \right \rangle 
        \end{bmatrix}\,\,\left(\text{mod}\,\,q\right)$
        \STATE Update $i=i+1$
    \ENDWHILE
% \ENDFOR
% \STATE $M = \left[m_1, m_2,..., m_L \right]$
\STATE $\mathcal{M}_j = \begin{bmatrix}
        m_{j1} & m_{j2} & \ldots & m_{jl} \\
        \end{bmatrix}$
\STATE \textbf{\textit{return $\mathcal{M}_j$}}
\end{algorithmic}
\end{algorithm}
\end{itemize}

We now proceed to analyze the security of the proposed scheme.

\subsection{Security}

We start by defining indistinguishability under chosen plaintext attack (IND-CPA) security for a symmetric multirecipient scheme . The IND-CPA game for a symmetric multirecipient scheme  which is played between an adversary and a challenger has the following stages
\begin{enumerate}
    \item {\it Initialize}: The challenger randomly samples a set of $m$-keys from $\mathbb{F}_q^m$. Further, she also randomly samples a bit $b \in \{0,1\}$.
    \item {\it Querying the challenger (encryption oracle)}: The adversary chooses a set of $p$ $m$-tuples of messages $(\mathcal{M}_1,\mathcal{M}_2,\ldots,\mathcal{M}_p)$ and sends to the challenger. The challenger acts as an encryption oracle, encrypts these message tuples and sends it back to the adversary. Let the corresponding ciphertexts be $\mathcal{C}_1,\mathcal{C}_2,\ldots,\mathcal{C}_p$.
    \item{\it Challenge}: The adversary chooses two $m$-tuples of messages $\mathcal{M}_{01}$ and $\mathcal{M}_{11}$ which are not identical to each other and to any of the message tuples previously queried and sends them to the challenger. Based on the value of $b$, the challenger encrypts $\mathcal{M}_{b1}$ and sends it to the adversary.
    \item{\it Guessing $b$}: The adversary attempts to guess the value of $b$ using the information at her disposal. She wins the game on guessing $b$ correctly.
\end{enumerate}

The IND-CPA security of the proposed scheme follows as a direct consequence of the following two results.

\begin{lemma}
\label{indis}
    Consider a set of $p$ $m$-tuples $\left(\mathcal{M}_1,\mathcal{M}_2,\ldots,\mathcal{M}_p\right)$ of message streams of length $l$, where $p$ is a polynomial in $m$. 
    Let  $\left(\mathcal{C}_1,\mathcal{C}_2,\ldots,\mathcal{C}_p \right)$ be the corresponding ciphertexts obtained  by encrypting the message streams using Algorithm \eqref{alg:ciphertext_algo}. 
    Then, under the $\mathcal{LWE}_{m,q,\chi}$assumption,  $\left(\mathcal{C}_1,\mathcal{C}_2,\ldots,\mathcal{C}_p \right)$  is indistinguishable from a set of $p$ elements each independently randomly sampled from a uniform distribution over $\mathbb{F}_q^{m \times (l+1)}$.
\end{lemma}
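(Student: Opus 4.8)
The plan is to bootstrap from the single-ciphertext guarantee already in hand --- Theorem \ref{thm_PRM} together with the fact (Lemma \ref{thm_LWE_PRM}) that $f_\chi(S,\bullet)$ is a PRM --- to the $p$-ciphertext statement by a hybrid argument over the $p$ ciphertexts. The first thing I would record is the structural observation that every ciphertext $\mathcal{C}_j$ produced by Algorithm \ref{alg:ciphertext_algo} is exactly $\bigl(v_0^{(j)},\mathcal{F}_{\mathcal{M}_j,S}(S,v_0^{(j)})\bigr)$ in the notation of Theorem \ref{thm_PRM}, where the initialization vectors $v_0^{(1)},\dots,v_0^{(p)}$ and all the noise terms are sampled freshly and independently, but the secret-key matrix $S=[s_1,\dots,s_m]^{T}$ is shared across all $p$ encryptions; in particular, generating all $p$ ciphertexts amounts to at most $pl$ invocations of the map $f_\chi(S,\bullet)$.

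Next I would set up hybrids $H_0,H_1,\dots,H_p$, where in $H_t$ the first $t$ ciphertexts are replaced by independent uniform samples from $\mathbb{F}_q^{m\times(l+1)}$ and the remaining $p-t$ are genuine encryptions of $\mathcal{M}_{t+1},\dots,\mathcal{M}_p$ under the shared key $S$. Then $H_0$ is the true joint distribution of $(\mathcal{C}_1,\dots,\mathcal{C}_p)$ and $H_p$ is the claimed target distribution, so it suffices to show that consecutive hybrids $H_{t-1}$ and $H_t$ are computationally indistinguishable and to sum the $p$ gaps. For a fixed $t$, a distinguisher $\mathcal{A}$ between $H_{t-1}$ and $H_t$ yields a reduction $\mathcal{B}$: on a challenge that is either $\mathcal{F}_{\mathcal{M}_t,S}(S,v_0)$ for a fresh uniform $v_0$, or a uniform element of $\mathbb{F}_q^{m\times l}$, the reduction $\mathcal{B}$ forms the $t$-th block by prepending its own $v_0$, fabricates $\mathcal{C}_1,\dots,\mathcal{C}_{t-1}$ as uniform blocks by itself, and recomputes the genuine blocks $\mathcal{C}_{t+1},\dots,\mathcal{C}_p$ by running Algorithm \ref{alg:ciphertext_algo} with calls to its oracle for $f_\chi(S,\bullet)$ in place of the explicit matrix--vector evaluation. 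Then $\mathcal{B}$ runs $\mathcal{A}$ on $(\mathcal{C}_1,\dots,\mathcal{C}_p)$ and copies its verdict: the first case reproduces $H_{t-1}$ and the second reproduces $H_t$. Because the conclusion of Theorem \ref{thm_PRM} ultimately rests on the pseudorandomness of $f_\chi(S,\bullet)$, which by Definition \ref{PRM_defn} tolerates polynomially many adaptive oracle queries, it remains valid against distinguishers such as $\mathcal{B}$ that make extra queries to $f_\chi(S,\bullet)$; hence each gap is at most $l\epsilon(m)$ and the total advantage of $\mathcal{A}$ is at most $pl\,\epsilon(m)$, which is negligible since $p$ and $l$ are polynomial in $m$.

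I expect the main obstacle to be precisely the correlation induced by the shared key: the $p$ ciphertexts are not independent, so Theorem \ref{thm_PRM} cannot be applied $p$ times as a true black box, and the reduction must regenerate the other genuine ciphertexts without knowing $S$ --- which is why oracle access to $f_\chi(S,\bullet)$, and the oracle-access formulation of pseudorandomness in Definition \ref{PRM_defn}, is essential. A secondary point I would be careful about is the exact meaning of the \emph{ideal} oracle: since $f_\chi$ re-randomizes its output with fresh noise on every call, the natural idealization is an oracle returning a fresh uniform value at each call, and one must verify that across the at most $pl$ invocations the argument vectors $v_{i-1}^{(j)}$ are pairwise distinct except with negligible probability --- a birthday-type bound, using that $|\mathbb{F}_q^m|=q^m$ is exponential in $m$ while $pl$ is polynomial --- so that substituting this oracle genuinely yields the fully uniform distribution over $(\mathbb{F}_q^{m\times(l+1)})^p$.

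An alternative worth noting avoids invoking Theorem \ref{thm_PRM} as a black box altogether: run a single finer hybrid over all $pl$ invocations of $f_\chi(S,\bullet)$ --- ordered, say, lexicographically by ciphertext index and then by step index --- replacing one invocation at a time by a fresh uniform sample and appealing to Lemma \ref{thm_LWE_PRM} at each step. This is just the hybrid argument underlying Theorem \ref{thm_PRM} \emph{unrolled} across the $p$ chains, and it gives the same $pl\,\epsilon(m)$ bound while remaining entirely self-contained.
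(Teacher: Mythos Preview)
Your proposal is correct and follows essentially the same route as the paper: a hybrid argument $H_0,\dots,H_p$ over the $p$ ciphertexts, bounding each consecutive gap by $l\epsilon(m)$ via Theorem~\ref{thm_PRM} and Lemma~\ref{thm_LWE_PRM}, and summing to $pl\,\epsilon(m)$. Your treatment of the shared-key correlation --- using oracle access to $f_\chi(S,\bullet)$ to regenerate the remaining genuine ciphertexts inside the reduction --- is in fact more explicit than the paper's own proof, which simply asserts that distinguishing $H_{p-k}$ from $H_{p-(k+1)}$ ``is equivalent to'' distinguishing a single ciphertext from uniform without spelling out how the reduction simulates the other correlated blocks.
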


\begin{proof}
Consider an algorithm $\mathcal{A}$ that aims to distinguish between $\left(\mathcal{C}_1,\mathcal{C}_2,\ldots,\mathcal{C}_p \right)$ and from a set of $p$ elements each independently randomly sampled from a uniform distribution over $\mathbb{F}_q^{m \times (l+1)}$. Let  $H_0,H_1,\ldots, H_p$ be a set of $p+1$ hybrid experiments that returns $p$ elements of $\mathbb{F}_q^{m \times (l+1)}$. In particular, for $0 \leq i \leq p$, the output of $H_i$ is a $p$-tuple, the first $i$ elements of which are independently randomly sampled from $U(\mathbb{F}_q^{m \times (l+1)})$ and the remaining elements are $\left(\mathcal{C}_{i+1},\mathcal{C}_{i+2},\ldots,\mathcal{C}_p \right)$. Note that $H_0$ returns the ciphertext $\left(\mathcal{C}_1,\mathcal{C}_2,\ldots,\mathcal{C}_p \right)$ and $H_p$ returns $p$ elements each independently randomly sampled from a uniform distribution over $\mathbb{F}_q^{m \times (l+1)}$.

Let $\mathcal{P}_i$ be the probability that $\mathcal{A}$ returns 1 when its input is the output of $H_i$.  We claim that, for any $1 \leq k \leq p$, $|\mathcal{P}_p - \mathcal{P}_{p-i}| \leq il\epsilon(m)$. We prove this using induction. \\
{\em Base case: } Observe that, $\left|\mathcal{P}_p - \mathcal{P}_0 \right|$ is the advantage of the Algorithm $\mathcal{A}$. Now, as a consequence of Theorem \eqref{thm_PRM} and Lemma \eqref{thm_LWE_PRM}, $\left|\mathcal{P}_p - \mathcal{P}_{p-1} \right|$ is less than $ l\epsilon(m)$, where $\epsilon(m)$ is the maximum advantage of an adversary against the pseudorandom map $f_\chi(S,\bullet)$. \\
{\em Induction step: } 
Assume that the claim is true for $i = k$. We have to prove that $|\mathcal{P}_p - \mathcal{P}_{p-(k+1)}| \leq (k+1)l\epsilon(m)$. \\
Observe that, distinguishing between the outputs $H_{p-k}$ and $H_{p-(k+1)}$ is equivalent to distinguishing between $C_{p-(k+1)}$ and a randomly sampled element of $\mathbb{F}_q^{m \times (l+1)}$. Therefore, $|\mathcal{P}_{p-(k+1)} - \mathcal{P}_{p-k}| \leq  l\epsilon(m)$. Now,
\begin{eqnarray*}
    |\mathcal{P}_p - \mathcal{P}_{p-(k+1)}| &\leq& |\mathcal{P}_p - \mathcal{P}_{p-k} +\mathcal{P}_{p-k} -\mathcal{P}_{p-(k+1)}| \\
    &\leq& |\mathcal{P}_p - \mathcal{P}_{p-k}| +|\mathcal{P}_{p-k} -\mathcal{P}_{p-(k+1)}|\\
    &\leq&  kl\epsilon(m) + l\epsilon(m) = (k+1)l\epsilon(m)
\end{eqnarray*}
Hence, the claim is proved. \\
Therefore $|\mathcal{P}_p - \mathcal{P}_{0}| \leq pl\epsilon(m)$. As $p$
is polynomial in $m$ and $\epsilon$ is a negligible function of $m$, $(\mathcal{C}_1,\mathcal{C}_2,\ldots,\mathcal{C}_p)$ is  indistinguishable from a set of $p$ elements each independently randomly sampled from a uniform distribution over $\mathbb{F}_q^{m \times (l+1)}$.\end{proof}
We now prove that the encryptions of two distinct $m$-tuples of message streams are indistinguishable.

\begin{lemma}
\label{indis1}
Consider two $m$-tuples of messages $\mathcal{M}_{01}$ and $\mathcal{M}_{02}$ wherein each message is of length $l$ where $l$ is bounded by a polynomial in $m$.    
%Let $S$ be an $m$-tuple of secret keys. 
Let $\mathcal{C}_{01} = \left[v_0, v_{1},v_{2},\ldots,v_{l} \right] \in \mathbb{F}_q^{m \times (l+1)}$ and $\mathcal{C}_{02} = \left[v_o', v_{1}',v_{2}',\ldots,v_{l}' \right] \in \mathbb{F}_q^{m \times (l+1)}$ be the ciphertexts obtained by respectively encrypting  $\mathcal{M}_{01}$ and $\mathcal{M}_{02}$, using Algorithm \ref{alg:ciphertext_algo}. % with the key tuple $S$ .  
 Then, under the $\mathcal{LWE}_{m,q,\chi}$assumption,   $\mathcal{C}_{01}$ and $\mathcal{C}_{02}$ are indistinguishable from each other and a randomly sampled element of $\mathbb{F}_q^{m \times (l+1)}$.
\end{lemma}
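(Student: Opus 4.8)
The plan is to obtain this lemma as a short corollary of Lemma~\ref{indis} combined with an elementary triangle-inequality argument on distinguishing advantages. First I would instantiate Lemma~\ref{indis} with $p=1$ and the single message tuple $\mathcal{M}_{01}$: since $1$ is trivially polynomial in $m$ and each message stream has length $l$ bounded by a polynomial in $m$, the lemma yields that $\mathcal{C}_{01}$ is indistinguishable from a uniformly random element of $\mathbb{F}_q^{m \times (l+1)}$, with every adversary algorithm having advantage at most $l\,\epsilon(m)$. Applying Lemma~\ref{indis} a second time with $p=1$ and $\mathcal{M}_{02}$ gives the identical statement for $\mathcal{C}_{02}$.

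Next I would combine the two comparisons. Let $\mathcal{A}$ be any adversary algorithm attempting to distinguish $\mathcal{C}_{01}$ from $\mathcal{C}_{02}$, and write $\mathcal{P}(X)=\Pr[\mathcal{A}(X)=1]$. Inserting the uniform distribution $U=U(\mathbb{F}_q^{m \times (l+1)})$ as an intermediate hybrid,
\begin{align*}
    \left| \mathcal{P}(\mathcal{C}_{01}) - \mathcal{P}(\mathcal{C}_{02}) \right| &\leq \left| \mathcal{P}(\mathcal{C}_{01}) - \mathcal{P}(U) \right| + \left| \mathcal{P}(U) - \mathcal{P}(\mathcal{C}_{02}) \right| \\
    &\leq 2\,l\,\epsilon(m),
\end{align*}
which is negligible in $m$ because $l$ is polynomial in $m$ and $\epsilon$ is negligible. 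Together with the two comparisons against $U$ already established, this shows that the distribution of $\mathcal{C}_{01}$, the distribution of $\mathcal{C}_{02}$, and the uniform distribution on $\mathbb{F}_q^{m \times (l+1)}$ are pairwise computationally indistinguishable, which is exactly the claim.

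The point to be careful about — more a subtlety than a genuine obstacle — is that $\mathcal{C}_{01}$ and $\mathcal{C}_{02}$ are produced under the \emph{same} secret-key matrix $S$, so one might worry about correlations between them. However, the statement asserts only pairwise indistinguishability of the marginal distributions, and in each comparison the adversary is handed a single ciphertext; the joint distribution of the pair never enters, so Lemma~\ref{indis} (whose proof already accommodates many ciphertexts under one key) applies verbatim to each marginal. A self-contained alternative would be to rerun the column-by-column hybrid argument of Theorem~\ref{thm_PRM} directly on $\mathcal{C}_{01}$ and $\mathcal{C}_{02}$, replacing each block $f_\chi(S,v_{i-1})+m_i$ (resp.\ $f_\chi(S,v_{i-1}')+m_i'$) by a fresh uniform vector; this reproves the lemma without citing Lemma~\ref{indis}, at the cost of repeating essentially the same computation. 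Finally I would note that chaining this lemma with Lemma~\ref{indis} applied to the $p$ queried message tuples yields IND-CPA security of the scheme, since the challenge ciphertext is indistinguishable from uniform regardless of the challenge bit $b$.
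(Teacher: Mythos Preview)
Your proposal is correct and follows essentially the same approach as the paper: show each of $\mathcal{C}_{01}$ and $\mathcal{C}_{02}$ is indistinguishable from uniform (with advantage bounded by $l\epsilon(m)$), then apply the triangle inequality through the uniform distribution to obtain the $2l\epsilon(m)$ bound. The only cosmetic difference is that the paper cites Theorem~\ref{thm_PRM} and Lemma~\ref{thm_LWE_PRM} directly for the comparison against uniform, whereas you invoke Lemma~\ref{indis} with $p=1$; these are equivalent, since Lemma~\ref{indis} with $p=1$ is precisely that combination.
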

\begin{proof}
Consider an algorithm $\mathcal{A}$ with a binary output that takes inputs from $\mathbb{F}_q^{m \times (l+1)}$.

Let $\mathcal{P}_1$ and $\mathcal{P}_2$ be the probabilities of $\mathcal{A}$ returning $1$ for inputs $ \mathcal{C}_{01}$ and $\mathcal{C}_{02}$ respectively.
\begin{eqnarray*}
    \mathcal{P}_1 &=& Pr\left[\mathcal{A}\left( \mathcal{C}_{01} \right) = 1  \right]\\
    \mathcal{P}_2 &=& Pr\left[\mathcal{A}\left( \mathcal{C}_{02} \right) = 1  \right]
\end{eqnarray*}
Let $\mathcal{P}_0$ be the probability of $\mathcal{A}$ returning $1$ when the input is randomly sampled from the distribution $U(\mathbb{F}_q^{m \times l})$.
\begin{equation}
    \mathcal{P}_0 = Pr\left[\mathcal{A}\left( \mathcal{C}_0 \right) = 1  \right]
\end{equation}
From Theorem \ref{thm_PRM} and Lemma \ref{thm_LWE_PRM}, $\left|\mathcal{P}_{1} - \mathcal{P}_{0} \right|$ and $\left|\mathcal{P}_{2} - \mathcal{P}_{0} \right|$ are less than $l\epsilon(m)$ where $\epsilon(m)$ is a negligible function of the security parameter. Let their values be $\epsilon_1$ and $\epsilon_2$, respectively.

Now, the advantage of $\mathcal{A}$ in distinguishing between $\mathcal{C}_{01}$ and $\mathcal{C}_{02}$ is given by
\begin{align}
        Adv(\mathcal{A}) &= \left|\mathcal{P}_{1} - \mathcal{P}_{2} \right| \notag \\
        &= \left|\mathcal{P}_{1} - \mathcal{P}_{0} + \mathcal{P}_{0} - \mathcal{P}_{2} \right| \notag \\
        &\leq \left|\mathcal{P}_{1} - \mathcal{P}_{0} \right| + \left|\mathcal{P}_{0} - \mathcal{P}_{2} \right| \notag \\
        &\leq \left|\mathcal{P}_{1} - \mathcal{P}_{0} \right| + \left|\mathcal{P}_{2} - \mathcal{P}_{0} \right| \notag \\
        &\leq l\epsilon(m) + l\epsilon(m) = 2l\epsilon(m)
\end{align}
Hence, $Adv(\mathcal{A})$  is negligible. Hence, ciphertext streams $\mathcal{C}_{01}$ and $\mathcal{C}_{02}$ are indistinguishable from each other.

\end{proof}

Using the preceding couple of lemmas, the following theorem establishes the IND-CPA security of the proposed scheme.
\begin{thm}\label{SecProof1}
    Consider a set of $p$ $m$-tuples $\left(\mathcal{M}_1,\mathcal{M}_2,\ldots,\mathcal{M}_p\right)$ of message streams of length $l$, where $p$ is a polynomial in $m$. 
    Let  $\left(\mathcal{C}_1,\mathcal{C}_2,\ldots,\mathcal{C}_p \right)$ be the corresponding ciphertexts obtained  by encrypting the message streams with an  $m$-tuple secret key $S$ using Algorithm \eqref{alg:ciphertext_algo}.     
    %Let $\mathcal{C}_1,\mathcal{C}_2,\ldots,\mathcal{C}_p$ be the encryptions of chosen $m$-tuples of messages $\mathcal{M}_1,\mathcal{M}_2,\ldots,\mathcal{M}_p$ under the proposed encryption scheme. 
    Assume two $m$-tuples of messages $\mathcal{M}_{01}$ and $\mathcal{M}_{02}$ that are distinct from each other and each of the message streams $\left(\mathcal{M}_1,\mathcal{M}_2,\ldots,\mathcal{M}_p\right)$. 
    Further, let  $\mathcal{C}_{01}$ and $\mathcal{C}_{02}$ be the ciphertexts of $\mathcal{M}_{01}$ and $\mathcal{M}_{02}$, respectively, obtained using the secret key $S$ in Algorithm \eqref{alg:ciphertext_algo}.     
    Then, under the $\mathcal{LWE}_{m,q,\chi}$ assumption, no algorithm with access to the pairs $(\mathcal{M}_i,\mathcal{C}_i)$ for $1\leq i\leq p$ can distinguish between $\mathcal{C}_{01}$ and $\mathcal{C}_{02}$ 
    with any significant advantage i.e., the encryption scheme is IND-CPA secure.
\end{thm}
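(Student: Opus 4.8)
The plan is to obtain the IND-CPA claim directly from Lemma \ref{indis} and Lemma \ref{indis1} by a short hybrid/triangle-inequality argument, with essentially no new work beyond bookkeeping. First I would recast the IND-CPA game in distinguisher form: fix any adversary $\mathcal{A}$, and observe that after the query phase $\mathcal{A}$ holds the list $(\mathcal{M}_i,\mathcal{C}_i)$ for $1\le i\le p$, the two challenge plaintexts $\mathcal{M}_{01},\mathcal{M}_{02}$ (which it chose), and the challenge ciphertext $\mathcal{C}_{b1}$, so that its IND-CPA advantage equals
\[
\bigl| Pr[\mathcal{A}((\mathcal{M}_i,\mathcal{C}_i)_{i=1}^{p},\mathcal{M}_{01},\mathcal{M}_{02},\mathcal{C}_{01})=1] - Pr[\mathcal{A}((\mathcal{M}_i,\mathcal{C}_i)_{i=1}^{p},\mathcal{M}_{01},\mathcal{M}_{02},\mathcal{C}_{02})=1] \bigr|.
\]
The essential point is that the challenge ciphertext and all $p$ training ciphertexts are produced under the \emph{same} secret-key tuple $S$, so the challenge ciphertext must not be analysed in isolation; the object to control is the joint law of the $(p+1)$-tuple $(\mathcal{C}_1,\dots,\mathcal{C}_p,\mathcal{C}_{0j})$ for $j\in\{1,2\}$.

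The key step is to apply Lemma \ref{indis} with $p+1$ message tuples instead of $p$, namely to the list $(\mathcal{M}_1,\dots,\mathcal{M}_p,\mathcal{M}_{0j})$, all encrypted under $S$ via Algorithm \ref{alg:ciphertext_algo}. Since $p$ is polynomial in $m$ so is $p+1$, and the lemma applies verbatim: for each $j\in\{1,2\}$ the tuple $(\mathcal{C}_1,\dots,\mathcal{C}_p,\mathcal{C}_{0j})$ is indistinguishable from a tuple of $p+1$ elements sampled independently and uniformly from $\mathbb{F}_q^{m\times(l+1)}$, with distinguishing advantage at most $(p+1)\,l\,\epsilon(m)$. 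Because this uniform reference distribution is the same for $j=1$ and $j=2$, a triangle inequality through it yields that $(\mathcal{C}_1,\dots,\mathcal{C}_p,\mathcal{C}_{01})$ and $(\mathcal{C}_1,\dots,\mathcal{C}_p,\mathcal{C}_{02})$ are indistinguishable with advantage at most $2(p+1)\,l\,\epsilon(m)$.

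It then remains to absorb the side information held by the IND-CPA adversary. The plaintexts $\mathcal{M}_1,\dots,\mathcal{M}_p,\mathcal{M}_{01},\mathcal{M}_{02}$ are chosen independently of $S$ and of the encryption randomness, so they can be hard-wired into (or internally sampled by) the distinguisher; conditioning on any fixed choice of them leaves all the bounds above intact. Packaging $\mathcal{A}$ together with those fixed plaintexts as a single distinguisher for $(\mathcal{C}_1,\dots,\mathcal{C}_p,\mathcal{C}_{01})$ versus $(\mathcal{C}_1,\dots,\mathcal{C}_p,\mathcal{C}_{02})$ gives $Adv(\mathcal{A})\le 2(p+1)\,l\,\epsilon(m)$, which is negligible in $m$ since $p$ and $l$ are polynomial in $m$ and $\epsilon$ is negligible; hence the scheme is IND-CPA secure. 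An equivalent route is to first invoke Lemma \ref{indis} to replace the $p$ training ciphertexts by key-independent uniform randomness, after which the challenge plaintexts are effectively chosen independently of $S$, and then apply Lemma \ref{indis1} to the lone challenge ciphertext; the same $2(p+1)l\epsilon(m)$-type bound results.

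I expect the main obstacle to be precisely this bookkeeping around the shared key: one has to be careful that Lemma \ref{indis} is stated and proved for all ciphertexts generated under one common $S$, so that applying it to the augmented $(p+1)$-list is legitimate, and that the adversary's choice of $\mathcal{M}_{01},\mathcal{M}_{02}$ — possibly made after seeing the training ciphertexts — can be folded into the non-adaptive statements of Lemmas \ref{indis} and \ref{indis1} (the cleanest way being the "swap training ciphertexts for randomness first" order above, which removes any dependence of the challenge-message choice on $S$). Verifying the additive constant by concatenating the hybrid chains of the two lemmas is routine and does not affect negligibility.
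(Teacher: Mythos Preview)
Your proposal is correct, and the ``equivalent route'' you sketch at the end --- replace the $p$ training ciphertexts by uniform randomness via Lemma~\ref{indis}, then bound the residual challenge-distinguishing advantage by Lemma~\ref{indis1} --- is exactly the paper's argument, which packages it as a proof by contradiction: assuming the IND-CPA advantage $\Delta$ is non-negligible, the gap $\Delta - 2l\epsilon(m)$ becomes the advantage of a distinguisher against the $p$-tuple $(\mathcal{C}_1,\ldots,\mathcal{C}_p)$, contradicting Lemma~\ref{indis}. Your primary route, applying Lemma~\ref{indis} directly to the augmented $(p{+}1)$-list and triangling through the common uniform reference, is a mild streamlining that dispenses with a separate invocation of Lemma~\ref{indis1}; it yields the same order of bound and, as you note, the adaptivity of the challenge-message choice is handled most cleanly by the ``swap training first'' ordering that both you and the paper ultimately rely on.
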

\begin{proof}
    Let $\Delta$ be the advantage of an algorithm $\mathcal{A}$, with access to the pairs $(\mathcal{M}_i,\mathcal{C}_i)$ in  distinguishing between $\mathcal{C}_{01}$ and $\mathcal{C}_{02}$. 
    To the contrary, assume that $\Delta$ is not negligible in $m$. Now, if the $\mathcal{C}_i$s that are fed to the algorithm $\mathcal{A}$ are replaced by randomly sampled elements of $\mathbb{F}_q^{m \times (l+1)}$, then, by Lemma \eqref{indis1}, the advantage of $\mathcal{A}$ is less than $2l\epsilon(m)$ where $\epsilon$ is a negligible function of $m$.

    Now, the algorithm $\mathcal{A}$ can be used to create an algorithm $\mathcal{A}'$ that can distinguish between the $p$-tuple $\mathcal{C}_1,\mathcal{C}_2,\ldots,\mathcal{C}_p$ from a set of $p$ randomly sampled elements of $\mathbb{F}_q^{m \times (l+1)}$ with significant advantage. Given access to both the challenger and the adversary algorithms in the IND-CPA game and a set of $p$ message-ciphertext pairs $(\mathcal{M}_i,\mathcal{C}_i)$ $(1\leq i\leq p)$, $\mathcal{A}'$ can instruct the adversary $\mathcal{A}$ to generate a pair of messages $\mathcal{M}_{01},\mathcal{M}_{02}$ and the challenger to generate a challenge for the adversary as in the IND-CPA game. The algorithm $\mathcal{A}'$ returns 1 if $\mathcal{A}$ guesses the challenge message correctly. Clearly, the advantage of $\mathcal{A}'$ is $\Delta - 2l\epsilon(m)$ which is not negligible. This contradicts Lemma \eqref{indis}. Hence, $\Delta$ is a negligible function of $m$ and the proposed scheme is IND-CPA secure.
\end{proof}

The above notion of IND-CPA security ensures the security of the encrypted data against an adversary who does not have access to the secret keys. However, in a multi-recipient scheme, it is important that any receiver or set of receivers should not be able to extract information about message streams that are not intended for them. We now extend the idea to IND-CPA security to the case when the adversary knows some of the keys. Assume that the adversary knows $k$ of the keys. Without loss of generality, we can assume these to be the last $k$ keys. We now proceed to prove that if, in addition to the $k$ keys, the adversary also knows the last $k$ entries of the other keys, she can still not distinguish between the first $m-k$  rows of encryptions of two distinct messages.

\begin{thm}\label{SecProof2}
	Consider a set of $p$ $m$-tuples $\left(\mathcal{M}_1,\mathcal{M}_2,\ldots,\mathcal{M}_p\right)$ of message streams of length $l$, where $p$ is a polynomial in $m$. 
	Let $\left(\mathcal{C}_1,\mathcal{C}_2,\ldots,\mathcal{C}_p \right)$ be the corresponding ciphertexts obtained by encrypting the message streams with a secret key matrix $S=[s_1,s_2,\ldots,s_m]^T$ using Algorithm \eqref{alg:ciphertext_algo}. (The rows of $S$ constitute the $m$ secret keys).
     Let $\mathcal{M} = (m_1,m_2,\ldots,m_l)$ be an arbitrary message that is distinct from $\left(\mathcal{M}_1,\mathcal{M}_2,\ldots,\mathcal{M}_p\right)$ and let $\mathcal{C}$ be the corresponding ciphertext obtained using $S$ in Algorithm \eqref{alg:ciphertext_algo}. 
    Consider an adversary $\mathcal{A}$ who has access
    to the following: 
    \begin{itemize}
    \item[(i)] pairs $(\mathcal{M}_i,\mathcal{C}_i)$ for $1\leq i \leq p$. 
    \item[(ii)] The last $k$ secret keys from $S$, and 
    \item[(iii)] The last $k$ entries of the remaining ${m-k}$ secret keys in $S$. 
    \item[(iv)] An oracle that evaluates the following map $f$ from $\mathbb{F}_q^{m-k}$ to $\mathbb{F}_q^{m-k}$.
    \begin{equation*}
        f(v) = S_1v + \epsilon,
    \end{equation*}
    where $\epsilon$ is randomly sampled from $\chi^{m-k}$ and $S_1$ is the matrix consisting of the first $m-k$ entries of the first $m-k$ rows of $S$ i.e. the top left $(m-k) \times (m-k)$ submatrix of $S$.
   \end{itemize}
	Then, under the $\mathcal{LWE}_{(m-k),q,\chi}$ assumption,  the adversary cannot efficiently distinguish the first $m-k$ rows of $\mathcal{C}$ from a matrix randomly sampled from a uniform distribution over $\mathbb{F}_q^{(m-k) \times (l+1)}$.
\end{thm}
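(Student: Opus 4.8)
The plan is to reduce to the pseudorandomness of the LWE-PRM \emph{in dimension $m-k$}, i.e.\ to $\mathcal{LWE}_{(m-k),q,\chi}$ through Lemma~\ref{thm_LWE_PRM}, after stripping away everything the colluding receivers already know. First I would fix the block decomposition $S=\begin{bmatrix} S_1 & S_2\\ S_3 & S_4\end{bmatrix}$, where $S_1$ is the unknown top-left $(m-k)\times(m-k)$ block and $S_2,S_3,S_4$ are exactly the entries revealed by items (ii)--(iii). For any ciphertext with IV $v_0$, split $v_i=(v_i^{(1)},v_i^{(2)})$ into its first $m-k$ and last $k$ coordinates; then
\begin{align*}
v_i^{(1)} &= \bigl(m_i^{(1)}+S_2 v_{i-1}^{(2)}\bigr)+\bigl(S_1 v_{i-1}^{(1)}+E_i^{(1)}\bigr),\\
v_i^{(2)} &= m_i^{(2)}+S_3 v_{i-1}^{(1)}+S_4 v_{i-1}^{(2)}+E_i^{(2)}.
\end{align*}
The structural observation is that $S_1 v_{i-1}^{(1)}+E_i^{(1)}$ is precisely one evaluation of the map $f(v)=S_1 v+\epsilon$ of item (iv), which is the dimension-$(m-k)$ LWE-PRM of Lemma~\ref{thm_LWE_PRM}; and that the bottom block, together with the ``effective message'' $m_i^{(1)}+S_2 v_{i-1}^{(2)}$ driving the top recursion, contains no direct occurrence of $S_1$. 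Hence, given $S_2,S_3,S_4$, the messages, the IV and fresh error vectors, the entire bottom block of every ciphertext and all the top-row offsets can be computed with nothing but oracle access to $f$, and the first $m-k$ rows of any ciphertext form exactly a recursive PRM-with-offsets stream as in Theorem~\ref{thm_PRM}, built from $f$.

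Next I would set up the reduction, paralleling Theorem~\ref{SecProof1} but confined to the first $m-k$ rows. Suppose an adversary $\mathcal{A}$ with access to (i)--(iv) distinguishes the first $m-k$ rows of $\mathcal{C}$ from a uniform element of $\mathbb{F}_q^{(m-k)\times(l+1)}$ with advantage $\Delta$. I construct $\mathcal{A}'$ against the pseudorandomness of $f$: $\mathcal{A}'$ receives an oracle $\mathcal{O}$ that is either $f$ or a truly random map, samples $S_2,S_3,S_4$ itself and hands the relevant parts to $\mathcal{A}$ as items (ii)--(iii), answers each item-(iv) query of $\mathcal{A}$ by forwarding it to $\mathcal{O}$, and produces every pair $(\mathcal{M}_i,\mathcal{C}_i)$ and the challenge ciphertext $\mathcal{C}$ by running Algorithm~\ref{alg:ciphertext_algo} with the known blocks and fresh errors, obtaining each update $S_1 v_{j-1}^{(1)}+E_j^{(1)}$ through a call $\mathcal{O}(v_{j-1}^{(1)})$; it then feeds $\mathcal{A}$ the first $m-k$ rows of the simulated $\mathcal{C}$ and echoes $\mathcal{A}$'s guess. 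When $\mathcal{O}=f$ the view of $\mathcal{A}$ is exactly the real world; when $\mathcal{O}$ is random, the first $m-k$ rows of every ciphertext --- in particular those of the challenge --- become i.i.d.\ uniform (up to negligible collision probability among the $\mathcal{O}$-inputs) and independent of everything else shown to $\mathcal{A}$, so in that world $\mathcal{A}$'s advantage is $0$.

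Finally, to show that $\mathcal{A}'$ inherits essentially the advantage $\Delta$, I would interpolate between the ``$\mathcal{O}=f$'' and ``$\mathcal{O}$ random'' worlds by a hybrid argument in which one more $f$-evaluation --- among the $pl$ evaluations inside the known ciphertexts, the $l$ inside the challenge, and the polynomially many oracle answers --- is replaced by a fresh uniform value at a time, exactly as in the proofs of Theorem~\ref{thm_PRM}, Lemma~\ref{indis}, Lemma~\ref{indis1} and Theorem~\ref{SecProof1}. Adjacent hybrids differ in a single evaluation, which is answered from the PRM challenge $(z,h)$ by placing $z$ as the relevant predecessor vector $v_{j-1}^{(1)}$ --- which is uniform in that hybrid --- recomputing $v_{j-1}^{(2)}$ from $z$, $S_3$, $S_4$ and the errors, and setting $v_j^{(1)}=h+\bigl(m_j^{(1)}+S_2 v_{j-1}^{(2)}\bigr)$; so each gap is at most $\epsilon(m-k)$ and the total is $\mathrm{poly}(m)\cdot\epsilon(m-k)$, negligible under $\mathcal{LWE}_{(m-k),q,\chi}$. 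Hence $\Delta$ is negligible, which is the assertion. The main obstacle --- and the reason item (iv) appears in the hypothesis --- is precisely that the offsets $m_i^{(1)}+S_2 v_{i-1}^{(2)}$ are correlated with the unknown $S_1$ through the bottom rows, so Theorem~\ref{thm_PRM} cannot be invoked verbatim; the way around it is to let the reduction carry the bottom rows itself, which it can because they never require $S_1$, only the leaked blocks and fresh noise. The one point to verify with care is that the predecessor vector spliced into each PRM challenge is genuinely uniform in the corresponding hybrid --- the same bookkeeping already carried out in the proof of Theorem~\ref{thm_PRM}.
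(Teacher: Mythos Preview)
Your proposal is correct and rests on the same two ideas as the paper: the block decomposition $S=\begin{bmatrix}S_1&S_2\\S_3&S_4\end{bmatrix}$ with the observation that the top $m-k$ coordinates of each ciphertext column depend on the unknown $S_1$ only through an evaluation of the dimension-$(m-k)$ LWE-PRM $f(v)=S_1v+\epsilon$, and a hybrid argument replacing these evaluations by uniform values one at a time under the $\mathcal{LWE}_{(m-k),q,\chi}$ assumption.

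The packaging differs. The paper runs its hybrids $H_1,\ldots,H_{l+1}$ directly over the columns of the single challenge ciphertext $\mathcal{C}$, keeping the bottom rows fixed and using the item-(iv) oracle $f$ in the induction step to transform $\widehat{\mathcal{C}}_{t-1}$ and $\widehat{\mathcal{C}}_t$ into the auxiliary matrices $\mathfrak{K}_{t-1},\mathfrak{K}_t$; it never spells out how the pairs $(\mathcal{M}_i,\mathcal{C}_i)$ from item (i) are handled. You instead build a single reduction $\mathcal{A}'$ that samples $S_2,S_3,S_4$ itself, simulates \emph{all} of items (i)--(iv) through the PRM oracle $\mathcal{O}$, and then hybridises over every $f$-call (in the $p$ known ciphertexts, the challenge, and the item-(iv) answers). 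Your route is a bit more self-contained and makes the role of the auxiliary pairs explicit; the paper's route is shorter because it leans on the item-(iv) oracle already granted in the hypothesis rather than re-deriving it inside the reduction. One small slip: $v_{j-1}^{(2)}$ is determined by $v_{j-2}$, not by $z$, so ``recomputing $v_{j-1}^{(2)}$ from $z$'' should read ``carrying $v_{j-1}^{(2)}$ forward and computing $v_j^{(2)}$ from $z$''; this does not affect the argument.
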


\begin{proof}
	  We partition $S \in \mathbb{F}_q^{m\times m}$ as $S = \left[\begin{array}{c|c} S_1 & S_2\\ \hline S_3 & S_4\end{array}\right]$, where $S_1 \in \mathbb{F}_q^{(m-k)\times(m-k)}$. Therefore,  $\mathcal{A}$  has access to $S_2, S_3$, and $S_4$. 
 	We now prove that even with access to $S_2, S_3$ and $S_4$, the first $m-k$ rows of $\mathcal{C}$ remain indistinguishable from  a matrix randomly sampled from a uniform distribution over $\mathbb{F}_q^{(m-k) \times (l+1)}$. 

	Consider $l+1$ hybrid experiments $H_i$ for $1 \leq i \leq l+1$. We define the output of the experiments next. 
	Corresponding to an experiment $H_i$, partition $\mathcal{C}$ as follows:
	\begin{align}\label{eqn:cipher}
		\mathcal{C} = 
		\left[\begin{array}{c|c} 
			\mathcal{C}_{1i} & \mathcal{C}_{2}\\ \hline \mathcal{C}_{3i} & \mathcal{C}_{4i}
		\end{array} \right] \in \mathbb{F}_q^{m \times (l+1)}, \mbox{ where } \mathcal{C}_{1i} \in \mathbb{F}_q^{(m-k) \times i}
	\end{align}	
	We define a new matrix 	$\widehat{\mathcal{C}}_i$ using the matrix in eq. \eqref{eqn:cipher} as follows
	\begin{align*}
		\widehat{\mathcal{C}}_i := 
		\left[\begin{array}{c|c} 
			\mathcal{C}_{1i} & \mathcal{U}_{i}\\ \hline \mathcal{C}_{3i} & \mathcal{C}_{4i}
		\end{array} \right] \in \mathbb{F}_q^{m \times (l+1)}, \mbox{ where } \mathcal{C}_{1i} \in \mathbb{F}_q^{(m-k) \times i}
	\end{align*}
	$\mathcal{U}_{i}$ are randomly sampled elements from $\mathbb{F}_q^{(m-k) \times (l+1-i)}$. The output of $H_i$ is the first $m-k$ rows of $\widehat{\mathcal{C}}_i$, i.e., $\begin{bmatrix}\mathcal{C}_{1i} & \mathcal{U}_i\end{bmatrix}$. 
    Hence, we need to prove that the output of the experiments $ H_1, \cdots, H_{l+1}$ are indistingushable from a matrix randomly sampled from a uniform distribution over $\mathbb{F}_q^{(m-k) \times (l+1)}$. We prove this using induction. 
    
    {\em Base case: } Note that the output of $H_1$ experiment are 
    the first $m-k$ rows of $\widehat{\mathcal{C}}_1=\begin{bmatrix}v_0 & \mathcal{U}_1\end{bmatrix}$. Since  $v_0$ is randomly sampled in the encryption algorithm, the output of $H_1$ is identical to a randomly sampled element from $U(\mathbb{F}_q^{(m-k)\times(l+1)})$. 
	
%	Let $\mathcal{C} := (v_0,v_1,\ldots,v_{l})$ be the encryption of $\mathcal{M}$. Conforming to the partition of $S$, let $v_i$ be partitioned as $ v_i = \left[\begin{matrix} \bar{v}_i\\\hline \bar{\bar{v}}_i\end{matrix}\right]$, where $0\leq i \leq l$ and 
%	$\bar{v}_i \in \mathbb{F}_q^{(m-k)}$. 
%	
%	  
%	
%	
%	
%	
%	For $0 \leq i \leq l$,  
%	
%	
%	 where the first $m-k$ entries of the vectors $v_{i+1},v_{i+2},\ldots,v_l$ in $\mathcal{C}$ are replaced by randomly sampled elements of $\mathbb{F}_q$.  
%	 The output of the experiment $H_l$ is the first $m-k$ rows of $\mathcal{C}$. 
	%As $v_0$ is randomly sampled in the encryption algorithm, the output of $H_0$ is identical to a randomly sampled element from $U(\mathbb{F}_q^{})$. 
    {\em Induction step: } Assume that the outputs of the experiments $H_1,H_2,\ldots,H_t$ are indistinguishable from each other and from randomly sampled elements of $\mathbb{F}_q^{(m-k) \times (l+1)}$. We now proceed to prove that this also holds true for the output of $H_{t+1}$. 

	 Let $P_i$ be the probability that $\mathcal{A}$ outputs 1 when its input is the output of $H_i$, for $1 \leq i \leq l$. Let 
	\begin{align*}
		\widehat{\mathcal{C}}_{t-1}\!\!=\!\! 
		\left[\begin{array}{c|c}
			\mathcal{C}_{1(t-1)} & \mathcal{U}_{t-1}\\ \hline 
			\mathcal{C}_{3(t-1)} & \mathcal{C}_{4(t-1)}
		\end{array}\right]  \!\!= \!\!
		\left[\begin{array}{ccc|cccc}
			\bar{v}_0 \!\!&\!\!\ \!\!\cdots \!\!\!\!&\!\! \bar{v}_{t-2} & \bar{w}_{t-1} & \bar{w}_{t} \!\!&\!\!\!\! \cdots \!\!&\!\! \bar{w}_{l}\\ \hline 
			\bar{\bar{v}}_0 \!\!&\!\!\!\!\cdots\!\! \!\!&\!\!\bar{\bar{v}}_{t-2} & \bar{\bar{v}}_{t-1} \!\!&\!\!\!\! \bar{\bar{v}}_{t} \!\!&\!\! \cdots & \bar{\bar{v}}_{l}
		\end{array}\right]
	\end{align*}
	Replace $\bar{w}_{t}$ with $\tilde{w}_{t} = \bar{m}_t + f(\bar{w}_{t-1}) + S_2\bar{\bar{v}}_{t-1} = \bar{m}_t+S_1\bar{w}_{t-1} + S_2\bar{\bar{v}}_{t-1} + e$ where $e$ is randomly sampled from $\chi^{m-k}$ and $\bar{m}_t$ consists of the first $m-k$ entries of $m_t$. Define 
	\begin{align*}
		 \mathfrak{K}_{t-1} := 	\left[\begin{array}{ccc|cccc}
			\bar{v}_0 \!\!&\!\!\ \!\!\cdots \!\!\!\!&\!\! \bar{v}_{t-2} & \bar{w}_{t-1} & \tilde{w}_{t} \!\!&\!\!\!\! \cdots \!\!&\!\! \bar{w}_{l}\\ \hline 
			\bar{\bar{v}}_0 \!\!&\!\!\!\!\cdots\!\! \!\!&\!\!\bar{\bar{v}}_{t-2} & \bar{\bar{v}}_{t-1} \!\!&\!\!\!\! \bar{\bar{v}}_{t} \!\!&\!\! \cdots & \bar{\bar{v}}_{l}
		\end{array}\right]
	\end{align*}
	Let $\tilde{\mathcal{P}}_{t-1} $ be the probability that $\mathcal{A}$ returns 1 when the input is the first $m-k$ rows of $\mathfrak{K}_{t-1}$.  Now, as $S_2$ is known to the adversary, distinguishing between $\mathfrak{K}_{t-1}$ and the output of $H_{t-1}$ is equivalent to solving an instance of the $\mathcal{LWE}_{(m-k),q,\chi}$ problem. Therefore $\epsilon_1 = |\tilde{\mathcal{P}}_{t-1} - \mathcal{P}_{t-1}|$ is a negligible function of $m-k$. 

    Similarly, let
\begin{align*}
		\widehat{\mathcal{C}}_{t}\!\!=\!\! 
		\left[\begin{array}{c|c}
			\mathcal{C}_{1t} & \mathcal{U}_{t}\\ \hline 
			\mathcal{C}_{3t} & \mathcal{C}_{4t}
		\end{array}\right]  \!\!= \!\!
		\left[\begin{array}{ccc|cccc}
			\bar{v}_0 \!\!&\!\!\ \!\!\cdots \!\!\!\!&\!\! \bar{v}_{t-1} & \bar{w}_{t}' & \bar{w}_{t+1}' \!\!&\!\!\!\! \cdots \!\!&\!\! \bar{w}_{l}'\\ \hline 
			\bar{\bar{v}}_0 \!\!&\!\!\!\!\cdots\!\! \!\!&\!\!\bar{\bar{v}}_{t-1} & \bar{\bar{v}}_{t} \!\!&\!\!\!\! \bar{\bar{v}}_{t+1} \!\!&\!\! \cdots & \bar{\bar{v}}_{l}
		\end{array}\right]
	\end{align*}

    Replace $\bar{w}_{t}'$ by $\tilde{w}_t' =  \bar{m}_t + f(\bar{v}_{t-1}) + S_2\bar{\bar{v}}_{t-1} =\bar{m}_t+ S_1\bar{v}_{t-1} + S_2\bar{\bar{v}}_{t-1} + e'$ where $e'$ is randomly sampled from $\chi^{m-k}$. Define 
	\begin{align*}
		 \mathfrak{K}_{t} := 	\left[\begin{array}{ccc|cccc}
			\bar{v}_0 \!\!&\!\!\ \!\!\cdots \!\!\!\!&\!\! \bar{v}_{t-1} & \tilde{w}_{t}' & \bar{w}_{t+1}' \!\!&\!\!\!\! \cdots \!\!&\!\! \bar{w}_{l}'\\ \hline 
			\bar{\bar{v}}_0 \!\!&\!\!\!\!\cdots\!\! \!\!&\!\!\bar{\bar{v}}_{t-1} & \bar{\bar{v}}_{t} \!\!&\!\!\!\! \bar{\bar{v}}_{t+1} \!\!&\!\! \cdots & \bar{\bar{v}}_{l}
		\end{array}\right]
	\end{align*}

    Let $\tilde{\mathcal{P}}_{t} $ be the probability that $\mathcal{A}$ returns 1 when the input is the first $m-k$ rows of $\mathfrak{K}_{t}$.
    Observe that the adversary can access the oracle to generate $\mathfrak{K}_{t-1}$ and $\mathfrak{K}_{t}$ from $\widehat{\mathcal{C}}_{t-1}$ and $\widehat{\mathcal{C}}_{t}$ respectively. Therefore, if the adversary can distinguish between $\mathfrak{K}_{t-1}$ and $\mathfrak{K}_{t}$, then he can effectively distinguish between $\widehat{\mathcal{C}}_{t-1}$ and $\widehat{\mathcal{C}}_{t}$. This contradicts the induction assumption. Therefore $|\tilde{\mathcal{P}}_{t-1} - \tilde{\mathcal{P}}_{t}| = \epsilon_2$ is negligible. Now,
    \begin{eqnarray*}
        |\tilde{\mathcal{P}}_{t}-\mathcal{P}_{t-1}| &=& |\tilde{\mathcal{P}}_{t}-\tilde{\mathcal{P}}_{t-1}+\tilde{\mathcal{P}}_{t-1}-\mathcal{P}_{t-1}|\\
        &\leq& |\tilde{\mathcal{P}}_{t}-\tilde{\mathcal{P}}_{t-1}|+|\tilde{\mathcal{P}}_{t-1}-\mathcal{P}_{t-1}|\\
        &=& \epsilon_1 +\epsilon_2.
    \end{eqnarray*}
Hence, $|\tilde{\mathcal{P}}_{t}-\mathcal{P}_{t-1}|$ is a negligible function of $m-k$. 

Observe that the distribution of $\mathfrak{K}_t$ is identical to that of $\widehat{\mathcal{C}}_{t+1}$. Therefore, $\tilde{\mathcal{P}}_{t} = \mathcal{P}_{t+1} $. Hence, $|{\mathcal{P}}_{t+1}-\mathcal{P}_{t-1}|$ is a negligible function of $m-k$. Hence the output of $H_{t+1}$ is indistinguishable from that of $H_{t-1}$ and therefore from the outputs of $H_1, H_2,\ldots, H_t$.

Hence, by induction, the outputs of $H_{l+1}$ and $H_1$ are indistinguishable from each other and a randomly sampled element of $\mathbb{F}_q^{(m-k) \times (l+1)}$.

\end{proof}
The following subsection discusses the choice of parameters based on the desired level of security.

\subsection{Choice of Parameters}
The work of Biasioli et al. \cite{biasioli2024tool}, provides a framework for estimating the desired security parameter $\lambda$ while choosing dimension $m$ in the LWE problem. For different choices of $\lambda$, different values of $m$ are chosen. From \cite[Section 6]{biasioli2024tool}, typically for a $128$-bit security level, i.e. $\lambda = 128$, the LWE dimension is chosen to be $m(\lambda) = 1024$, and the value of parameter $q$ is chosen such that $log\,q = 20-41$.

\subsection{Example}
\label{example}
We now demonstrate how the proposed algorithm can be used to encrypt images. The length of the secret vectors is 1024. The value of $q$ is chosen as $2^{31}-1$. As this number is a Mersenne prime, the $mod\,\,q$ equivalent of any integer can be found very efficiently.

 As the proposed algorithm operates over the field $\mathbb{F}_q$, the plain-text images are first converted to a string of elements in $\mathbb{F}_q$ as follows.

\subsubsection*{Generating an {$\mathbb{F}_q$} string }
Consider a gray-scale image of size $({r} \times {c})$ (having ${r}$ rows of ${c}$ pixels each) where each pixel is encoded as an $8$ bit word. To encrypt the image, we consider a window of  $t = \frac{\left \lfloor log_2q \right \rfloor}{8}$ consecutive pixels on the same row. Each window contains less than $log_2q$ bits and the corresponding integer value can be considered as an element of $\mathbb{F}_q$. The window traverses each row of the image from left to right, shifting one pixel at a time, and then proceeds to the next row. For each row, the first window position contains the first $t$ pixels of that row (starting from the left). As the window approaches the right end of a row, it wraps around the first few pixels from the left. For example, the last window position in any row contains the rightmost pixel of that row and the first $t-1$ pixels of that row. Thus, each row has $y$ window positions and each pixel is contained in $t$ of them. The total number of window positions is the same as the number of pixels in an image i.e., $({rc})$. 

\begin{figure}[ht!]
\centering
\includegraphics[width=0.5\textwidth]{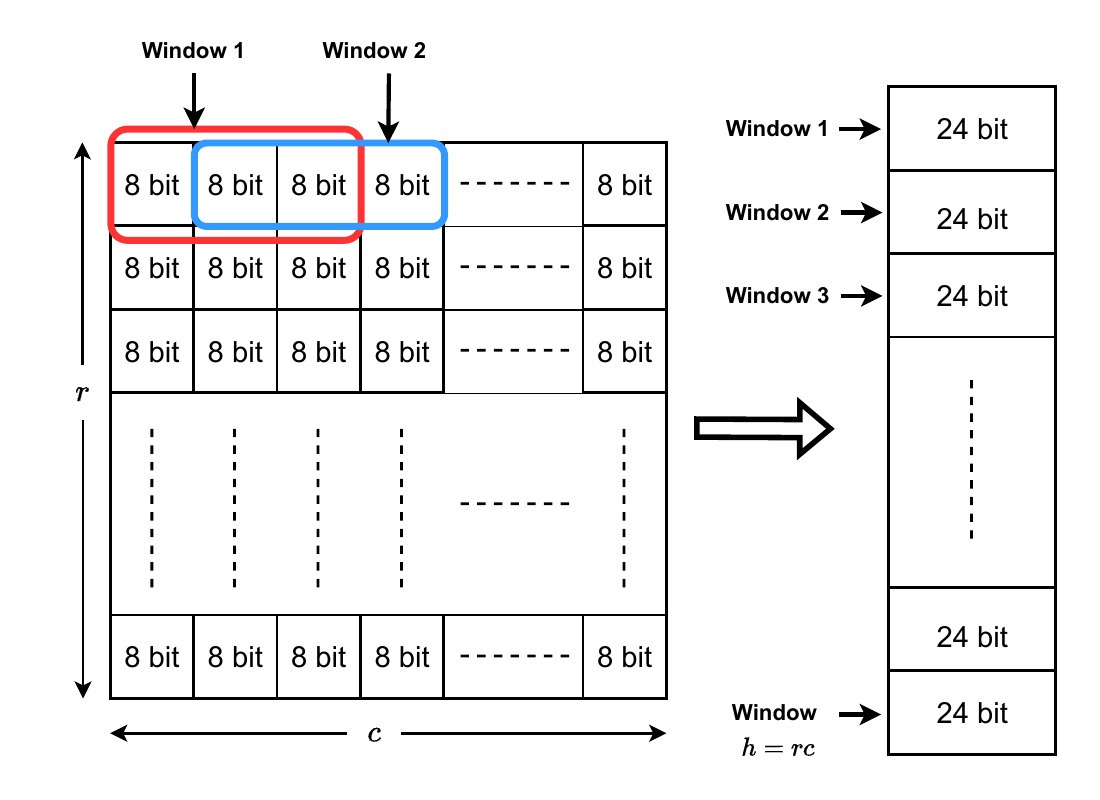}
\caption{A gray-scale image of size $({r} \times c)$ (on the left) transformed to a message stream of length $h$ (on the right)}
\label{image_pixel_example}
\end{figure}
 %It is to be noted that, for each row, shifting the window 1 pixel to the right horizontally (starting from the first pixel) will result in a total of $y$ windows such that each pixel repeats twice along each row. Here, the $i^{th}$ window starts with the $\left[ i\,\left(\text{mod}\,q \right) \right]$ pixel and ends with $\left[ i+t-1\,\left(\text{mod}\,q \right) \right]$ pixel. Each of these $y$ windows (in each row) has $8t$ bits, which are further transformed to equivalent binary, and appending each of these bits together gives an element in $\mathbb{F}_q$ (corresponding to each window). So, the length of the message stream obtained from the image will be $xy$, having each element in $\mathbb{F}_q$.\par

 \subsubsection*{An MKMR Example}
To implement the MKMR scheme on images with the chosen parameters, we need a set of 1024 images. As it is impossible to demonstrate the results of such encryption in this article, we have considered a set of 4 gray-scale images shown in Fig. \eqref{fig_first_case}. These images are converted to 4 streams over $\mathbb{F}_q$ as described above. The other messages are replaced by random strings. The time taken for the encryption and decryption processes is therefore equal to the time needed for encrypting and decrypting 1024 images. Each image is consists of $512 \times 512$ pixels. Thus the total size of data encrypted in the process is approximately 260 MB. The corresponding secret vectors are randomly sampled from a uniform distribution over $\mathbb{F}_q^m$. The window size in this case is taken as 3. Thus each window has 24 bits and is stored as a 32-bit signed integer. During decryption, we first recover the sequence of windows. Each pixel can be recovered from 3 windows. We determine the value of the pixel by considering the window where it occurs in the second position. This is because the central pixel in any window is less effected by noise. The plain-text images, encrypted images and decrypted images are shown in Fig.  \eqref{fig_first_case}, \eqref{fig_second_case} and \eqref{fig_third_case} respectively. The total time taken for the encryption process (not considering the time taken to convert the images into message streams) is 4 seconds on a workstation with an INTEL i5, 2.6 GHz processor, 16 GB RAM, and an Windows-11 operating system. 
\begin{figure*}[ht!]
\centering
\includegraphics[width=\textwidth]{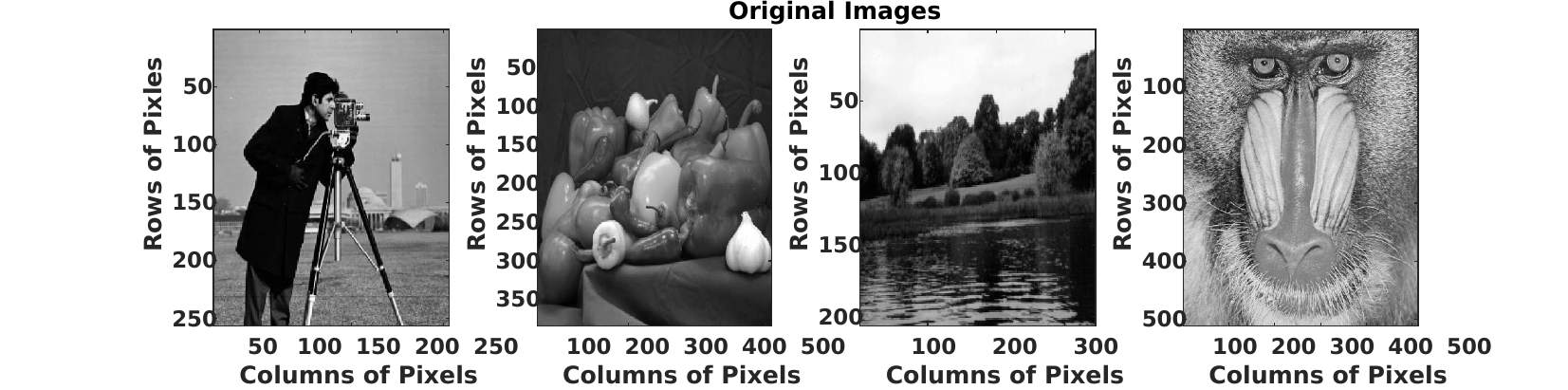}
\caption{Simulation plot of Original Images}
\label{fig_first_case}
\end{figure*}
\begin{figure*}[ht!]
\centering
\includegraphics[width=\textwidth]{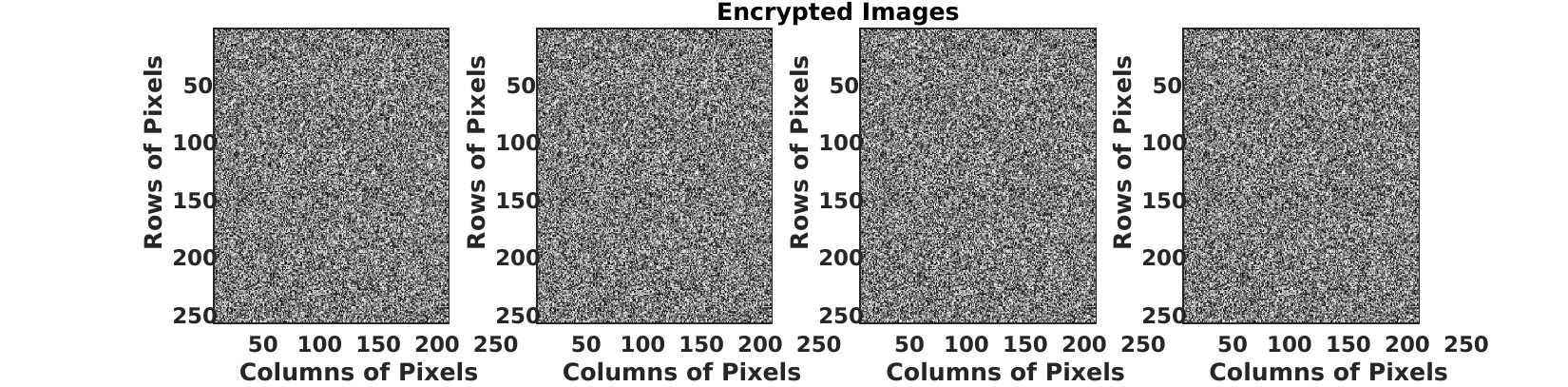}
\caption{Simulation plot of Encrypted Images}
\label{fig_second_case}
\end{figure*}
\begin{figure*}[ht!]
\centering
\includegraphics[width=\textwidth]{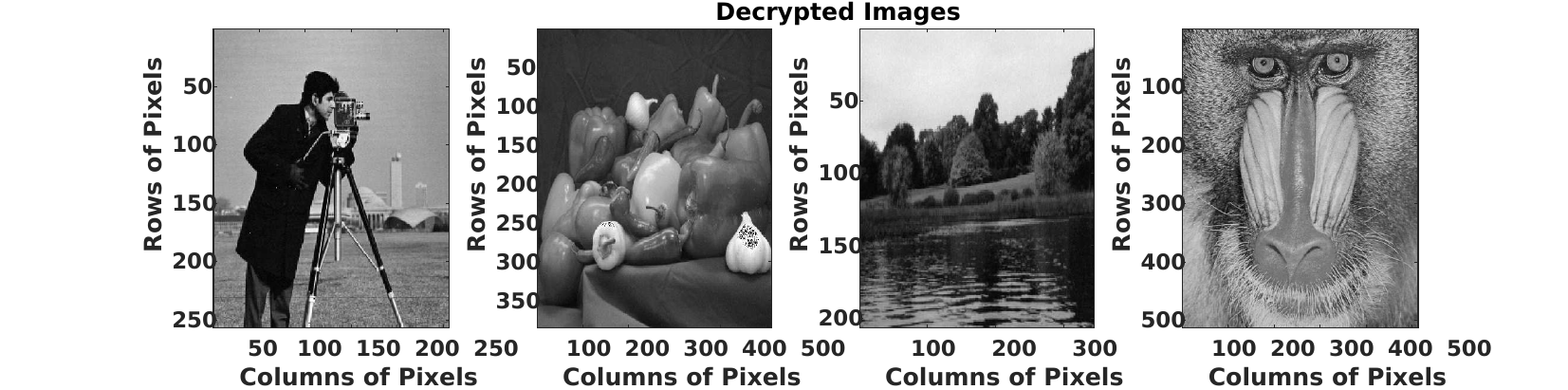}
\caption{Simulation plot of Decrypted Images}
\label{fig_third_case}
\end{figure*}

\section{Conclusion}
This paper presents a scheme for lattice-based multi-recipient symmetric key cryptography.  The proposed scheme considers the case when a single sender communicates with multiple receivers through a single ciphertext. The proposed method aims to achieve high speed encryption while allowing for some addition of noise in the recovered message. The encryption and decryption speeds of these implementations are promising and it will be interesting to explore hardware implementations for higher speeds and efficiency.

{\appendix[Proof of Lemma \ref{thm_LWE_PRM}]
\begin{proof}
Construct a series of hybrid experiments $H_0,H_1,\ldots,H_m$ where in the experiment $H_k$, the elements $(v_1,v_2,\ldots,v_k)$ are obtained using equation \eqref{f_S_prm} where the error terms $e_1,e_2,\ldots,e_k$ are randomly sampled from uniform distribution over $\mathbb{F}_q$ and $v_i$ for $i = k+1,k+2,\ldots,m$ are calculated using equation \eqref{f_S_prm} where the error terms $e_{k+1},e_{k+2},\ldots,e_m$ are sampled from distribution $\chi$.

% randomly sampled from a uniform distribution over $\mathbb{F}_q$ and $v_i$ for $i = k+1,k+2,\ldots,m$ are calculated as follows:
%     \begin{equation}
%         v_i = f(s,v_{i-1}) + m_i
%     \end{equation}
    The output of $H_k$ is $(v_1,v_2,\ldots,v_m) \in \mathbb{F}_q^{m}$. Observe that the experiment $H_0$ corresponds to the output $(v_1,v_2,\ldots,v_m)$ where the error terms are sampled from distribution $\chi$ and $H_m$ corresponds to the output $(v_1,v_2,\ldots,v_m)$ where the error terms are randomly sampled from uniform distribution.\par
    Consider a binary algorithm $\mathcal{A}$ that accepts elements from $\mathbb{F}_q^{m}$ and aims to distinguish between experiments $H_0$ and $H_m$. Let 
    \begin{equation}
        \mathcal{P}_k = Pr\left[\mathcal{A}(H_k) = 1 \right]
    \end{equation}
    be the probability that $\mathcal{A}$ returns $1$ when it takes elements generated from the experiment $H_k$ as the input. If for any $k$, algorithm $\mathcal{A}$ can distinguish between samples from $H_{k-1}$ and $H_k$, then $\mathcal{A}$ can be used to construct an adversary $\mathcal{A}'$, with oracle access to $f_{\chi}(s,\bullet)$ that can distinguish outputs of $f_{\chi}(s,\bullet)$ from that of a truly random map. But under the $\mathcal{LWE}_{m,q,\chi}$ assumption, the advantage of $\mathcal{A}'$ i.e. $Adv(\mathcal{A}')$ is negligible function of $m$. So, $Adv(\mathcal{A})$ is also negligible function of $m$. Hence, the map $f_{\chi}$ is indistinguishable from a truly random map in $\mathbb{F}_q^m$. Therefore, $f_{\chi}$ is a pseudorandom map.

\end{proof}
% Use $\backslash${\tt{appendix}} if you have a single appendix:
% Do not use $\backslash${\tt{section}} anymore after $\backslash${\tt{appendix}}, only $\backslash${\tt{section*}}.
% If you have multiple appendixes use $\backslash${\tt{appendices}} then use $\backslash${\tt{section}} to start each appendix.
% You must declare a $\backslash${\tt{section}} before using any $\backslash${\tt{subsection}} or using $\backslash${\tt{label}} ($\backslash${\tt{appendices}} by itself
%  starts a section numbered zero.)}

%{\appendices
%\section*{Proof of the First Zonklar Equation}
%Appendix one text goes here.
% You can choose not to have a title for an appendix if you want by leaving the argument blank
%\section*{Proof of the Second Zonklar Equation}
%Appendix two text goes here.}

% \section{References Section}
% You can use a bibliography generated by BibTeX as a .bbl file.
%  BibTeX documentation can be easily obtained at:
%  http://mirror.ctan.org/biblio/bibtex/contrib/doc/
%  The IEEEtran BibTeX style support page is:
%  http://www.michaelshell.org/tex/ieeetran/bibtex/
 
 % argument is your BibTeX string definitions and bibliography database(s)
%\bibliography{IEEEabrv,../bib/paper}
%
% \section{Simple References}
% You can manually copy in the resultant .bbl file and set second argument of $\backslash${\tt{begin}} to the number of references
%  (used to reserve space for the reference number labels box).

\bibliographystyle{IEEEtran}
\bibliography{bib_file.bib}

% Generated by IEEEtran.bst, version: 1.14 (2015/08/26)
\begin{thebibliography}{10}
\providecommand{\url}[1]{#1}
\csname url@samestyle\endcsname
\providecommand{\newblock}{\relax}
\providecommand{\bibinfo}[2]{#2}
\providecommand{\BIBentrySTDinterwordspacing}{\spaceskip=0pt\relax}
\providecommand{\BIBentryALTinterwordstretchfactor}{4}
\providecommand{\BIBentryALTinterwordspacing}{\spaceskip=\fontdimen2\font plus
\BIBentryALTinterwordstretchfactor\fontdimen3\font minus \fontdimen4\font\relax}
\providecommand{\BIBforeignlanguage}[2]{{%
\expandafter\ifx\csname l@#1\endcsname\relax
\typeout{** WARNING: IEEEtran.bst: No hyphenation pattern has been}%
\typeout{** loaded for the language `#1'. Using the pattern for}%
\typeout{** the default language instead.}%
\else
\language=\csname l@#1\endcsname
\fi
#2}}
\providecommand{\BIBdecl}{\relax}
\BIBdecl

\bibitem{bellare2000}
M.~Bellare, A.~Boldyreva, and S.~Micali, ``Public-key encryption in a multi-user setting: Security proofs and improvements,'' in \emph{Advances in Cryptology—EUROCRYPT 2000: International Conference on the Theory and Application of Cryptographic Techniques Bruges, Belgium, May 14--18, 2000 Proceedings 19}.\hskip 1em plus 0.5em minus 0.4em\relax Springer, 2000, pp. 259--274.

\bibitem{kurosawa2002multi}
K.~Kurosawa, ``Multi-recipient public-key encryption with shortened ciphertext,'' in \emph{Public Key Cryptography: 5th International Workshop on Practice and Theory in Public Key Cryptosystems, PKC 2002 Paris, France, February 12--14, 2002 Proceedings 5}.\hskip 1em plus 0.5em minus 0.4em\relax Springer, 2002, pp. 48--63.

\bibitem{bellare2002randomness-reuse}
M.~Bellare, A.~Boldyreva, and J.~Staddon, ``Randomness re-use in multi-recipient encryption schemeas,'' in \emph{Public Key Cryptography—PKC 2003: 6th International Workshop on Practice and Theory in Public Key Cryptography Miami, FL, USA, January 6--8, 2003 Proceedings 6}.\hskip 1em plus 0.5em minus 0.4em\relax Springer, 2002, pp. 85--99.

\bibitem{bellare2003multi}
------, ``Multi-recipient encryption schemes: Security notions and randomness re-use,'' in \emph{PKC}, vol. 2003, 2003, pp. 85--99.

\bibitem{bellare2007multi}
M.~Bellare, A.~Boldyreva, K.~Kurosawa, and J.~Staddon, ``Multi-recipient encryption schemes: Efficient constructions and their security,'' \emph{IEEE Transactions on Information Theory}, vol.~53, no.~11, pp. 3927--3943, 2007.

\bibitem{bellare2007multirecipient}
------, ``Multirecipient encryption schemes: How to save on bandwidth and computation without sacrificing security,'' \emph{IEEE Transactions on Information Theory}, vol.~53, no.~11, pp. 3927--3943, 2007.

\bibitem{shor1999polynomial}
P.~W. Shor, ``Polynomial-time algorithms for prime factorization and discrete logarithms on a quantum computer,'' \emph{SIAM review}, vol.~41, no.~2, pp. 303--332, 1999.

\bibitem{long2001grover}
G.-L. Long, ``Grover algorithm with zero theoretical failure rate,'' \emph{Physical Review A}, vol.~64, no.~2, pp. 022--307, 2001.

\bibitem{Peikart2022_approx_SVP}
H.~Bennett and C.~Peikert, ``Hardness of the (approximate) shortest vector problem: A simple proof via {R}eed-{S}olomon codes,'' \emph{arXiv preprint arXiv:2202.07736}, 2022.

\bibitem{micciancio2001_Gap_SVP_CVP}
D.~Micciancio, ``The shortest vector in a lattice is hard to approximate to within some constant,'' \emph{SIAM journal on Computing}, vol.~30, no.~6, pp. 2008--2035, 2001.

\bibitem{peikert2009_SVP_GapSVP}
C.~Peikert, ``Public-key cryptosystems from the worst-case shortest vector problem,'' in \emph{Proceedings of the forty-first annual ACM symposium on Theory of computing}, 2009, pp. 333--342.

\bibitem{SIVP_SVP_CVP}
J.~Bl{\"o}mer and S.~Naewe, ``Sampling methods for shortest vectors, closest vectors and successive minima,'' \emph{Theoretical Computer Science}, vol. 410, no.~18, pp. 1648--1665, 2009.

\bibitem{regev2009lattices}
O.~Regev, ``On lattices, learning with errors, random linear codes, and cryptography,'' \emph{Journal of the ACM (JACM)}, vol.~56, no.~6, pp. 1--40, 2009.

\bibitem{regev2010learning}
------, ``The learning with errors problem,'' \emph{Invited survey in CCC}, vol.~7, no.~30, p.~11, 2010.

\bibitem{peikert2008lossy}
C.~Peikert and B.~Waters, ``Lossy trapdoor functions and their applications,'' in \emph{Proceedings of the fortieth annual ACM symposium on Theory of computing}, 2008, pp. 187--196.

\bibitem{biasioli2024tool}
B.~Biasioli, E.~Kirshanova, C.~Marcolla, and S.~Rovira, ``A tool for fast and secure {LWE} parameter selection: the {FHE} case,'' \emph{Cryptology ePrint Archive}, 2024.

\end{thebibliography}

% \begin{thebibliography}{1}
% \bibliographystyle{IEEEtran}

% \bibitem{ref1}
% {\it{Mathematics Into Type}}. American Mathematical Society. [Online]. Available: https://www.ams.org/arc/styleguide/mit-2.pdf

% \bibitem{ref2}
% T. W. Chaundy, P. R. Barrett and C. Batey, {\it{The Printing of Mathematics}}. London, U.K., Oxford Univ. Press, 1954.

% \bibitem{ref3}
% F. Mittelbach and M. Goossens, {\it{The \LaTeX Companion}}, 2nd ed. Boston, MA, USA: Pearson, 2004.

% \bibitem{ref4}
% G. Gr\"atzer, {\it{More Math Into LaTeX}}, New York, NY, USA: Springer, 2007.

% \bibitem{ref5}M. Letourneau and J. W. Sharp, {\it{AMS-StyleGuide-online.pdf,}} American Mathematical Society, Providence, RI, USA, [Online]. Available: http://www.ams.org/arc/styleguide/index.html

% \bibitem{ref6}
% H. Sira-Ramirez, ``On the sliding mode control of nonlinear systems,'' \textit{Syst. Control Lett.}, vol. 19, pp. 303--312, 1992.

% \bibitem{ref7}
% A. Levant, ``Exact differentiation of signals with unbounded higher derivatives,''  in \textit{Proc. 45th IEEE Conf. Decis.
% Control}, San Diego, CA, USA, 2006, pp. 5585--5590. DOI: 10.1109/CDC.2006.377165.

% \bibitem{ref8}
% M. Fliess, C. Join, and H. Sira-Ramirez, ``Non-linear estimation is easy,'' \textit{Int. J. Model., Ident. Control}, vol. 4, no. 1, pp. 12--27, 2008.

% \bibitem{ref9}
% R. Ortega, A. Astolfi, G. Bastin, and H. Rodriguez, ``Stabilization of food-chain systems using a port-controlled Hamiltonian description,'' in \textit{Proc. Amer. Control Conf.}, Chicago, IL, USA,
% 2000, pp. 2245--2249.

% \end{thebibliography}

\newpage

% \section{Biography Section}
% If you have an EPS/PDF photo (graphicx package needed), extra braces are
 % needed around the contents of the optional argument to biography to prevent
 % the LaTeX parser from getting confused when it sees the complicated
 % $\backslash${\tt{includegraphics}} command within an optional argument. (You can create
 % your own custom macro containing the $\backslash${\tt{includegraphics}} command to make things
 % simpler here.)
 
% \vspace{11pt}

% \bf{If you include a photo:}\vspace{-33pt}
% \begin{IEEEbiography}[{\includegraphics[width=1in,height=1.25in,clip,keepaspectratio]{}}]{Michael Shell}
% Use $\backslash${\tt{begin\{IEEEbiography\}}} and then for the 1st argument use $\backslash${\tt{includegraphics}} to declare and link the author photo.
% Use the author name as the 3rd argument followed by the biography text.
% \end{IEEEbiography}

% \vspace{11pt}

% \bf{If you will not include a photo:}\vspace{-33pt}
% \begin{IEEEbiographynophoto}{John Doe}
% Use $\backslash${\tt{begin\{IEEEbiographynophoto\}}} and the author name as the argument followed by the biography text.
% \end{IEEEbiographynophoto}

\vfill

\end{document}